\documentclass[runningheads,a4paper,11pt]{article}
\usepackage[margin=.75in]{geometry}
\usepackage{algorithmicx}
\usepackage{algpseudocode}
\usepackage{bbm} 
\usepackage[numbers,sort&compress]{natbib} 
\usepackage[ruled]{algorithm2e}

\usepackage[T1]{fontenc}
\usepackage[utf8]{inputenc}
\usepackage{url}
\usepackage{subfig}
\usepackage{tabularx, booktabs}
\newcolumntype{Y}{>{\centering\arraybackslash}X}
\usepackage{mathtools}
\usepackage{thmtools}
\usepackage{thm-restate}
\usepackage{graphicx, rawfonts}

\parskip 1ex
\textwidth 6.5in
\textheight 9.0in
\topmargin -5mm
\headsep 1 cm
\oddsidemargin -0.1in
\evensidemargin -0.1in
\parindent 0.2in

\DeclareMathOperator*{\sign}{sign} 

\usepackage[normalem]{ulem}
\usepackage{amssymb,amsmath,graphicx,amsthm}
\usepackage{amsfonts}
\usepackage{color}
\usepackage{placeins}
\newcommand\bbR{\ensuremath{\mathbb{R}}} 
\newcommand\bbE{\ensuremath{\mathbb{E}}} 

\newcommand\bC{\ensuremath{\mathbf{C}}}

\usepackage{enumitem}
\usepackage{balance}

\usepackage{todonotes}
\usepackage{lscape}

\providecommand{\customgenericname}{}
\newcommand{\newcustomtheorem}[2]{%
	\newenvironment{#1}[1]
	{%
		\renewcommand\customgenericname{#2}%
		\renewcommand\theinnercustomgeneric{##1}%
		\innercustomgeneric
	}
	{\endinnercustomgeneric}
}
\newtheorem{theorem}{Theorem}
\newtheorem{definition}{Definition}

\newtheorem{lemma}{Lemma}

\newcustomtheorem{customthm}{Theorem}
\newcustomtheorem{customprop}{Proposition}
\setlist{nolistsep}
\newcommand\mc[1]{\multicolumn{1}{c}{#1}} 
\newcommand{\subscript}[2]{$#1 _ #2$}

\newcommand\blfootnote[1]{%
	\begingroup
	\renewcommand\thefootnote{}\footnote{#1}%
	\addtocounter{footnote}{-1}%
	\endgroup
}

\begin{document}

\title{Iterative Local Voting for Collective Decision-making in Continuous Spaces}

\author{
	Nikhil Garg\\
	Stanford University\\
	\texttt{nkgarg@stanford.edu} \\
	\and
	Vijay Kamble\\
	University of Illinois at Chicago\\
	\texttt{kamble@uic.edu} \\
	\and
		Ashish Goel\\
	Stanford University\\
	\texttt{ashishg@stanford.edu} \\
	\and
	David Marn\\
	University of California, Berkeley\\
	\texttt{marn@berkeley.edu} \\
		\and
	Kamesh Munagala\\
	Duke University\\
	\texttt{kamesh@cs.duke.edu} \\
}

\maketitle

\begin{abstract}
Many societal decision problems lie in high-dimensional continuous spaces not amenable to the voting techniques common for their discrete or single-dimensional counterparts. These problems are typically discretized before running an election or decided upon through negotiation by representatives. We propose a algorithm called {\sc Iterative Local Voting} for collective decision-making in this setting. In this algorithm, voters are sequentially sampled and asked to modify a candidate solution within some local neighborhood of its current value, as defined by a ball in some chosen norm, with the size of the ball shrinking at a specified rate. 

We first prove the convergence of this algorithm under appropriate choices of neighborhoods to Pareto optimal solutions with desirable fairness properties in certain natural settings: when the voters' utilities can be expressed in terms of some form of distance from their ideal solution, and when these utilities are additively decomposable across dimensions. In many of these cases, we obtain convergence to the societal welfare maximizing solution.

We then describe an experiment in which we test our algorithm for the decision of the U.S. Federal Budget on Mechanical Turk with over 2,000 workers, employing neighborhoods defined by $\mathcal{L}^1, \mathcal{L}^2$ and $\mathcal{L}^\infty$ balls. We make several observations that inform future implementations of such a procedure.\blfootnote{Supported by NSF grant nos. CCF-1408784, CCF-1637397, CCF-1637418, and IIS-1447554, ONR grant no. N00014-15-1-2786, ARO grant no. W911NF-14-1-0526, and the NSF Graduate Research Fellowship under grant no. DGE-114747. This work benefited from many helpful discussions with Oliver Hinder.}
\end{abstract}

\section{Introduction}
Methods and experiments to increase large-scale, direct citizen participation in policy-making have recently become commonplace as an attempt to revitalize democracy. Computational and crowdsourcing techniques involving human-algorithm interaction have been a key driver of this trend~\cite{cabannes_participatory_2004, mcdermott_building_2010, lee_crowdsourcing_2014, quarfoot_quadratic_2017}. Some of the most important collective decisions, whether in government or in business, lie in high-dimensional, continuous spaces -- e.g. budgeting, taxation brackets and rates, collectively bargained wages and benefits, urban planning etc. Direct voting methods originally designed for categorical decisions are typically infeasible for collective decision-making in such spaces. Although there has been some theoretical progress on designing mechanisms for continuous decision-making~\cite{procaccia_approximate_2009,cheng_survey_2015,moulin_strategy-proofness_1980}, in practice these problems are usually resolved using traditional approaches -- they are either discretized before running an election, or are decided upon through negotiation by committee, such as in a standard representative democracy~\cite{cabannes_participatory_2004,shah_participatory_2007,sintomer_porto_2008,gilman_transformative_2012,goel_knapsack_2016}. 

One of the main reasons for the current gap between theory and practice in this domain is the challenge of designing practically implementable mechanisms. We desire procedures that are simple enough to explain and use in practice, and that result in justifiable solutions while being robust to the inevitable deviations from ideal models of user behavior and preferences. To address this challenge, a social planner must first make practically reasonable assumptions on the nature and complexity of feedback that can be elicited from people and then design simple algorithms that operate effectively under these conditions. Further, while robustness to real-world model deviations may be difficult to prove in theory, it can be checked in practice through experiments.

We first tackle the question of what type of feedback voters can give. In general, for the types of problems we wish to solve, a voter cannot fully articulate her utility function. Even if voters in a voting booth had the patience to state their exact utility for a reasonably large number of points (e.g. how much they liked each candidate solution on a scale from one to five), there is no reason to believe that they could do so in any consistent manner. On the other hand, we posit that it is relatively easy for people to choose their favorite amongst a reasonably small set of options, or articulate how they would like to locally modify a candidate solution to better match their preferences. Such an assumption is common and is a central motivation in social choice, especially \textit{implicit utilitarian voting}~\cite{procaccia_distortion_2006}.

In this paper, we study and experimentally test a type of algorithm for large-scale preference aggregation that effectively leverages the possibility of asking voters such easy questions. In this algorithm that we call {\sc Iterative Local Voting} ({\sc ILV}), voters are sequentially sampled and are asked to modify a candidate solution to their favorite point within some local neighborhood, until a stable solution is obtained (if at all). With a continuum of voters, no one votes more than once. The algorithm designer has flexibility in deciding how these local neighborhoods are defined -- in this paper we focus on neighborhoods that are balls in the $\mathcal{L}^q$ norm, and in particular on the cases where $q=1,\,2$ or $\infty$. (For $M < \infty$ dimensional vectors, the $\mathcal{L}^q$ norm $\|x\|_q \triangleq \sqrt[q]{\sum_m |x_m|^q}$. $q=1,\,2$ and $\infty$ neighborhoods correspond to bounds on the sum of absolute values of the changes, the sum of the square of the changes, and the maximum change, respectively.)

More formally, consider a $M$-dimensional societal decision problem in $\mathcal{X} \subset \bbR^M$ and a population of voters $\mathcal{V}$, where each voter $v \in \mathcal{V}$ has bounded utility $f_v(x) \in \bbR, \forall \,x \in \mathcal{X}$. Then we consider the class of algorithms described in Algorithm~\ref{algo:demeq}. We study the algorithm class under two plausible models of how voters respond to query~\eqref{query}, which asks for the voter's favorite point in a local region.
\begin{itemize}[leftmargin=*]
	\item \textbf{Model A: } One possibility is that voters exactly perform the maximization asked of them, responding with their favorite point in the given $\mathcal{L}^q$ norm constraint set. In other words, they return a point ${ \arg\max}_{x\in \{ s : \|s - x_{t-1} \|_q \leq r_t\}} f_{v_t}(x)$. Note that by definition of this movement, the algorithm is \textit{myopically incentive compatible}: if a voter is the last voter and no projections are used, then truthfully performing this movement is the dominant strategy. In general, the mechanism is not globally incentive compatible, nor incentive compatible with projections onto the feasible region. Simple examples of manipulations in both instances exist.
	\item \textbf{Model B: } On the other hand, voters may not actually search within the constraint set to find their favorite point inside of it. Rather, a voter $v$ may have an idea about how to best improve the current point and then move in that direction to the boundary of the given constraint set. This model leads to a voter moving the current solution in the direction of the gradient of her utility function, returning a point $x_{t-1} + r_t \frac{g_{t}}{\| g_{t} \|_q}$, for some $g_{t} \in \partial f_{v_t}(x_{t-1})$. Note that $\partial f(x)$ denotes the set of subgradients of a function $f$ at $x$, i.e. $g\in\partial f(x)$ if $\forall y$, $f(y) - f(x) \geq g^T(y - x)$.
\end{itemize}

{\sc ILV} is directly inspired by the stochastic approximation approach to solve optimization problems \cite{robbins_stochastic_1951}, especially stochastic gradient descent (SGD) and the stochastic subgradient method (SSGM). The idea is that if (a) voter preferences are drawn from some probability distribution and (b) the response of a voter to the query~\eqref{query} moves the solution approximately in the direction of her utility gradient, then this procedure \emph{almost} implements stochastic gradient descent for minimizing negative expected utility. 
\begin{algorithm}[t]
	\SetAlgorithmName{Algorithm}{}{}
	\textit{Inputs: } Initial solution $x_0\in \mathcal{X}$, tolerance $\epsilon>0$, an integer $N$, initial radius $r_0>0$, termination time $T$, norm $q$ for local neighborhood. \\
	\textit{Output: } Solution $x$.
	\begin{itemize}
		\item For $t\geq 1$, sample a voter $v_t\in \mathcal{V}$ at random from the population; set $r_t=r_0/t$ and elicit 
		\begin{equation}\label{query}
		x'_t ={ \arg\max}_{x\in \{ s : \|s - x_{t-1} \|_q \leq r_t\}} f_{v_t}(x),
		\end{equation}
		and then compute $x_t= [x'_t]_\mathcal{X}$, where $[\cdot]_\mathcal{X}$ is a projection onto space $\mathcal{X}$; i.e. ask the voter to move to her favorite point within the constraints, and find the projection of the reported point onto $\mathcal{X}$. 
		\item Stop when either $t=T$, in which case return $x_T$, or when $\max_{l,\,m \in \{t - N,\dots,t\}} |x_l-x_m|\leq \epsilon$, in which case return $x=x_t$.
	\end{itemize}
	\caption{{\sc Iterative Local Voting (ILV)}}
	\label{algo:demeq}
\end{algorithm}

The caveat is that although the procedure can potentially obtain the direction of the gradient of the voter utilities, it cannot in general obtain any information about its magnitude since the movement norm is chosen by the procedure itself. However, we show that for certain plausible utility and voter response models, the algorithm does indeed converge to a unique point with desirable properties, including cases in which it converges to the societal optimum.

Note that with such feedback and without any additional assumptions on voter preferences (e.g. that voter utilities are normalized to the same scale), no algorithm has any hope of finding a desirable solution that depends on the cardinal values of voters' utilities, e.g., the social welfare maximizing solution (the solution that maximizes the sum of agent utilities). This is because an algorithm that uses only ordinal information about voter preferences is insensitive to any scaling or even monotonic transformations of those preferences.

\subsection{Contributions}

This work is a step in extending the vast literature in social choice to continuous spaces, taking into account the feedback that voters can \textit{actually give}. Our main theoretical contributions are as follows:\\
\begin{itemize}[leftmargin=*]
	\item {\bf Convergence for $\mathcal{L}^p$ normed utilities: }We show that if the agents cost functions can be expressed as the $\mathcal{L}^p$ distance from their ideal solution, and if agents correctly respond to query~\eqref{query}, then an interesting duality emerges: for $p=1,\, 2$ or $\infty$, using $\mathcal{L}^q$ neighborhoods, where $q=\infty,\,2$ and $1$ respectively, results in the algorithm converging to the unique social welfare optimizing solution. Whether such a result holds for general $(p,q)$, where $q$ is the dual norm to $p$ (i.e. $1/p + 1/q = 1$), is an open question. However, we show that such a general result holds if, in response to query~\eqref{query}, the voter instead moves the current solution in the direction of the gradient of her utility function to the neighborhood boundary.
	\item  {\bf Convergence for other utilities: }Next, we show convergence to a unique solution in two cases: (a) when the voter cost can be expressed as a weighted sum of $\mathcal{L}^2$ distances over sub-spaces of the solution space, under $\mathcal{L}^2$ neighborhoods -- in which case the solution is also Pareto efficient, and (b)  when the voter utility can be additively decomposed across dimensions, under $\mathcal{L}^\infty$ neighborhoods -- in which case the algorithm converges to the median of the ideal solutions of the voters on each dimension.\\
\end{itemize}

We then build a platform and run the first large-scale experiment in voting in multi-dimensional continuous spaces, in a budget allocation setting. We test three variants of {\sc ILV}: with $\mathcal{L}^1$, $\mathcal{L}^2$ and $\mathcal{L}^\infty$ neighborhoods. Our main findings are as follows:\\
\begin{itemize}[leftmargin=*]
	\item We observe that the algorithm with $\mathcal{L}^\infty$ neighborhoods is the only alternative that satisfies the first-order concern for real-world deployability: consistent convergence to a unique stable solution. Both $\mathcal{L}^1$ and $\mathcal{L}^2$ neighborhoods result in convergence to multiple solutions. 
	\item The consistent convergence under $\mathcal{L}^\infty$ neighborhoods in experiments strongly suggests the decomposability of voter utilities for the budgeting problem. Motivated by this observation, we propose a general class of decomposable utility functions to model user behavior for the budget allocation setting.
	
	\item We make several qualitative observations about user behavior and preferences. For instance, voters have large indifference regions in their utilities, with potentially larger regions in dimensions about which they care about less. Further, we show that asking voters for their ideal budget allocations and how much they care about a given item is fraught with UI biases and should be carefully designed.\\
\end{itemize}

We remark that an additional attractive feature of such a constrained local update algorithm in a large population setting is that strategic behavior from the voters is less of a concern: even if a single voter is strategic, her effect on the outcome is negligible. Further, it may be difficult for a voter, or even a coalition of voters, to strategically vote; one must reason over the possible future trajectories of the algorithm over the randomness of future voters. One coalition strategy for $\mathcal{L}^2$ and $\mathcal{L}^\infty$ neighborhoods, voters trade votes on different dimensions with one another; we leave robustness to such strategies to future work.

The structure of the paper is as follows. After discussing related work in Section~\ref{sec:related}, we present convergence results for our algorithm under different settings in Section~\ref{sec:conv}. In Section~\ref{sec:expdesc}, we introduce the budget allocation problem and describe our experimental platform. In Section~\ref{sec:convergencepatterns}, we analyze the experiment results, and then we conclude the paper in Section~\ref{sec:conclusion}. The proofs of our results are in the appendix.

\section{Related Work}\label{sec:related}
Our work relates to various strands of literature. We note that a conference version of this work appeared previously~\cite{garg_collaborative_2017}. Furthermore, the term ``iterative voting'' is also used in other works to denote unrelated methods~\cite{meir_convergence_2010,airiau_iterated_2009}.
\paragraph{Stochastic Gradient Descent}
As discussed in the introduction, we draw motivation from the stochastic subgradient method (SSGM), and our main proof technique is mapping our algorithm to SSGM. Beginning with the original stochastic approximation algorithm by Robbins and Monro \cite{robbins_stochastic_1951}, a rich literature surrounds SSGM, for instance see~\cite{boyd_subgradient_2006,nemirovski_robust_2009,jiang_scheduling_2010,shor_nondifferentiable_1998}. 
\paragraph{Iterative local voting} A version of our algorithm, with $\mathcal{L}^2$ norm neighborhoods, has been proposed independently several times \cite{hylland_mechanism_1980,chung_directional_2018,benjamin_aggregating_2013} and is referred to as Normalized Gradient Ascent (NGA). Instead of directly asking voters to perform query~\eqref{query}, the movement $ \frac{\nabla f_v(x_{t-1})}{\|\nabla f_v(x_{t-1})\|_2}$ would be estimated through population surveys to try to compute the fixed point where $\textup{E}_v\left[\frac{\nabla f_v(x)}{\|\nabla f_v(x)\|_2}\right] = 0$. (Note that we work with distributions of voters and for strictly concave utility functions, the movement for each voter is well-defined for all but a measure 0 set. Then, given a bounded density function of voters, the expectation is well-defined).

This fixed point has been called Directional Equilibrium (DE) in the recent literature \cite{chung_directional_2018}. The movement is equivalent to the movement in this work in the case voters respond according to Model B and with $\mathcal{L}^2$ neighborhoods, and we show in Section~\ref{sec:equiv} that, in such cases, the algorithm converges to a Directional Equilibrium when it converges. We further conjecture that even under voter Model A, if Algorithm~\ref{alg:votermovetomin} converges, the fixed point is a Directional Equilibrium.

Several properties of the fixed point have been studied, starting from \cite{hylland_mechanism_1980} to more recently, \cite{chung_directional_2018} and \cite{benjamin_aggregating_2013}: it exists under light assumptions, is Pareto efficient, and has important connections to the Majority Core literature in economics. Showing that an iterative algorithm akin to ours converges to such a point has been challenging; indeed, except for special cases such as quadratic utilities $f_v(x) = -(x - x^v)^T\Omega(x - x^v)$, with society-wide $\Omega$ that encodes the relative importance and relationships between issues  \cite{benjamin_aggregating_2013}, convergence is an open question.

Our algorithm differs from NGA in a few crucial directions, even in the case that the movement is equivalent: by relating our algorithm to SGD, we are able to characterize the step-size behavior necessary for convergence and show convergence even when each step is made by a single voter, rather than after an estimate of the societal normalized gradient. One can also characterize the convergence rate of the algorithm~\cite{nemirovski_robust_2009}. Furthermore, the literature has referred to the $\mathcal{L}^2$ norm (or ``quadratic budget'') constraint as ``central to their strategic properties''~\cite{benjamin_relationship_2017}. In this work, this limitation is relaxed -- the same strategic property, myopic incentive compatibility, holds for the other norm constraints for their respective cases.

Finally, because we are primarily interested in designing implementable voting mechanisms, we focus on somewhat different concerns than the directional equilibria literature. However, we believe that the ideas in this work, especially the connections to the optimization literature, may prove useful to work on NGA.  To the best of our knowledge, no work studies such an algorithm with other neighborhoods and under ordinal feedback, or implements such an algorithm. 
\paragraph{Optimization without gradients}
Because we are concerned with optimization without access to voters' utility functions or its gradients, this work seems to be in the same vein as recent literature on convex optimization without gradients -- such as with comparisons or with pairs of function evaluations \cite{flaxman_online_2005,jamieson_query_2012,duchi_dual_2012,duchi_optimal_2015}. However, in the social choice or human optimization setting, we cannot estimate each voter's utility functions or gradients exactly rather than up to a scaling term, and yet we would like to find some point with good societal properties. This limitation prevents the use of strategies from such works.

\citet{jamieson_query_2012}, for example, present an optimal coordinate-descent based algorithm to find the optimum of a function for the case in which noisy comparisons are available on that function; in our setting, such an algorithm could be used to find the optimal value for \textit{each voter}, but not the societal optimum because each voter can independently scale her utility function. \citet{duchi_dual_2012} present a distributed optimization algorithm where each node (voter) has access to its own subgradients and a few of its neighbors, but in our case each voter can arbitrarily scale her utility function and thus her subgradients. Similar problems emerge in applying results from the work of~\citet{duchi_optimal_2015}. In our work, such scaling does not affect the point to which the algorithm converges. 
\paragraph{Participatory Budgeting} The experimental setting for this work, and a driving motivation, is Participatory Budgeting, in which voters are asked to help create a government budget. Participatory budgeting has been among the most successful programs of Crowdsourced Democracy, with deployments throughout the world allocating hundreds of millions of dollars annually, and studies have shown its civic engagement benefits~\cite{cabannes_participatory_2004,shah_participatory_2007,sintomer_porto_2008,gilman_transformative_2012,goel_knapsack_2016,mcdermott_building_2010,lee_crowdsourcing_2014}.

In a typical election, community members propose projects, which are then refined and voted on by either their representatives or the entire community, through some discrete aggregation scheme. In no such real-world election, to our knowledge, can the amount of money to allocate to a project be determined in a continuous space within the voting process, except through negotiation by representatives.

\citet{goel_knapsack_2016} propose a ``Knapsack Voting'' mechanism in which each voter is asked to create a valid budget under the budget constraint; the votes are then aggregated using K-approval aggregation on each dollar in the budget, allowing for fully continuous allocation in the space. This mechanism is strategy-proof under some voter utility models. In comparison, our mechanism works in more general spaces and is potentially easier for voters to do. 

\paragraph{Implicit Utilitarian Voting} With a finite number of candidates, the problem of optimizing some societal utility function (based on the cardinality of voter utilities) given only ordinal feedback is well-studied, with the same motivation as in this work: ordinal feedback such as rankings and subset selections are relatively easy for voters to provide. The focus in such work, referred to as \textit{implicit utilitarian voting}, is to minimize the distortion of the output selected by a given voting rule, over all possible utility functions consistent with the votes, i.e. minimize the worst case error achieved by the algorithm due to an under-determination of utility functions when only using the provided inputs~\cite{procaccia_distortion_2006,caragiannis_voting_2011,goel_metric_2017,caragiannis_subset_2017}. In this work, we show convergence of our algorithm under certain implicit utility function forms. However, we do not characterize the maximum distortion of the resulting fixed point (or even the convergence to any fixed point) under any utility functions consistent with the given feedback, leaving such analysis for future work.  

\section{Convergence Analysis}\label{sec:conv}
In this section, we discuss the convergence properties of {\sc ILV} under various utility and behavior models. For the rest of the technical analysis, we make the following assumptions on our model.

\begin{enumerate}[label=\subscript{\textbf{C}}{\arabic*}]
	\item The solution space $\mathcal{X}\subseteq \mathbb{R}^M$ is non-empty, bounded, closed, and convex.
	\item Each voter $v$ has a unique ideal solution $x_v\in \mathcal{X}$.
	\item The ideal point $x_v$ of each voter is drawn independently from a probability distribution with a bounded and measurable density function $h_\mathcal{X}$. 
\end{enumerate}
Under this model, for a solution $x\in\mathcal{X}$, the societal utility is given by $\textup{E}_v[f_v(x)]$. and the social optimal (SO) solution is any $x^*\in \arg\max_{x\in\mathcal{X}}\textup{E}_v[f_v(x)]$. 

``Convergence'' of {\sc ILV} refers to the convergence of the sequence of random variables $\{x_t\}_{t\geq 1}$ to some $x\in\mathcal{X}$ with probability $1$, assuming that the algorithm is allowed to run indefinitely (this notion of convergence also implies the termination of the algorithm with probability $1$).

In the following subsections, we present several classes of utility functions for which the algorithm converges, summarized in Table~\ref{tab:totalsummary}. We further formalize the relationship to directional equilibria in Section~\ref{sec:equiv}.

\begin{table}
	\centering
	\begin{tabular}{l|m{4cm}| m{4cm}|}
		\mc{} & \mc{\textbf{Model A}} & \mc{\textbf{Model B}} \\		\cline{2-3}
		\textbf{Spatial}, $(p,q) = (2, 2)$, $(1, \infty)$, or $(\infty, 1)$ & \multicolumn{2}{c|}{Social Opt. (Thm \ref{thm:pqmain})} \\\cline{2-3}
		\textbf{Spatial}, $(p,q)\text{ s.t. } 1/p+1/q = 1$ & {$\,\,\,\,\,\,\,\,\,\,\,\,\,\,\,\,\,\,\,\,\,\,\,\,\,\,\,\,\,\,\,\,$?}  &{$\,\,\,\,\,\,\,\,$    Social Opt. (Thm \ref{thm:pqother})} \\\cline{2-3}
		\textbf{Weighted Euclidean} & \multicolumn{2}{c|}{Social Opt. (Thm \ref{thm:other1})} \\\cline{2-3}
		\textbf{Decomposable} & \multicolumn{2}{c|}{Medians (Thm \ref{thm:other2})} \\\cline{2-3}
	\end{tabular}
	\caption{Summary of convergence results}
	\label{tab:totalsummary}
\end{table}

\subsection{Spatial Utilities}
\label{sec:spatialutilities}

Here we consider \emph{spatial} utility functions, where the utilities of each voters can be expressed in the form of some kind of spatial distance from their ideal solutions. First, we consider the following kind of utilities.
\begin{definition}{\bf $\mathcal{L}^p$ normed utilities.}
	The voter utility function is $\mathcal{L}^p$ normed if $f_v(x) = -\|x - x_v\|_p, \forall x\in \mathcal{X}$. 
\end{definition} 

Under such utilities, for $p=1,\,2$ and $\infty$, restricting voters to a ball in the dual norm leads to convergence to the societal optimum.
\begin{restatable}{theorem}{thmpqmain}\label{thm:pqmain}
	Suppose that conditions $\bC_1$, $\bC_2$, and $\bC_3$ are satisfied, the voter utilities are $\mathcal{L}^p$ normed, and voters respond to query $\eqref{query}$ according to either \textbf{Model A} or \textbf{Model B}. Then, {\sc ILV} with $\mathcal{L}^q$ neighborhoods converges to the societal optimal point w.p. $1$ when $(p,q) = (2, 2)$, $(1, \infty)$, or $(\infty, 1)$. 
\end{restatable}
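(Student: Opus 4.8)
The plan is to map {\sc ILV} onto the stochastic subgradient method (SSGM) for minimizing the convex function $F(x) \triangleq \textup{E}_v[\|x - x_v\|_p] = -\textup{E}_v[f_v(x)]$ over the compact convex set $\mathcal{X}$, and then invoke a standard almost-sure convergence theorem for SSGM with step sizes $r_t = r_0/t$ (which satisfy $\sum r_t = \infty$, $\sum r_t^2 < \infty$). Since each $\|x - x_v\|_p$ is convex in $x$, so is $F$; moreover, under conditions $\bC_1$--$\bC_3$, $F$ is finite, continuous, and — because the ideal points have a bounded density $h_\mathcal{X}$ with no atoms — actually strictly convex, hence has a unique minimizer $x^*$, the SO solution. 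The key observation is that for the three pairs $(p,q) = (2,2),(1,\infty),(\infty,1)$, where $q$ is the dual norm of $p$, a voter's response to query~\eqref{query} — whether under Model A or Model B — is precisely a unit step (in the appropriate sense) along a subgradient of $f_{v_t}$ at $x_{t-1}$. So the first step is to establish this equivalence.

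The heart of the matter is the \emph{subgradient identification lemma}: for $g \in \partial_x \|x - x_v\|_p$ (i.e. $-g \in \partial f_v(x)$), one has $\|g\|_q = 1$ whenever $x \neq x_v$ (by the dual characterization of the norm: the subdifferential of $\|\cdot\|_p$ at a nonzero point is exactly $\{g : \|g\|_q \le 1,\ g^T z = \|z\|_p\}$, and any such $g$ has $\|g\|_q = 1$). Consequently, in Model B, the voter moves to $x_{t-1} - r_t \frac{g_t}{\|g_t\|_q} = x_{t-1} - r_t g_t$, which is literally a stochastic subgradient step of size $r_t$. For Model A I would argue that the exact maximizer of $f_{v_t}$ over the $\mathcal{L}^q$-ball $\{s : \|s - x_{t-1}\|_q \le r_t\}$ is also of the form $x_{t-1} - r_t g_t$ for some $g_t \in \partial(\|\cdot - x_v\|_p)(x_{t-1})$: minimizing the convex function $\|s - x_v\|_p$ over the ball is, by Lagrangian/first-order conditions, achieved at a boundary point where the outward normal to the ball aligns with a subgradient, and the normal cone of the $\mathcal{L}^q$-ball is generated by $\mathcal{L}^p$-unit vectors — this is exactly the Hölder duality at play, and it is why the dual-norm pairing is the right one. (For $r_t$ large enough to contain $x_v$, the maximizer is $x_v$ itself, a zero subgradient step; but since $r_t \to 0$ and $\mathcal{X}$ is bounded, this is a transient, finitely-occurring event away from $x^*$ and does not affect the asymptotics.) I would treat the projection $[\cdot]_\mathcal{X}$ as the projection step of projected SSGM, which preserves convergence since $\mathcal{X}$ is closed and convex and $x^* \in \mathcal{X}$.

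With the equivalence in hand, the remaining steps are bookkeeping: (i) verify the bounded-second-moment condition on the steps — here $\|g_t\|_q = 1$ so $\textup{E}[\|r_t g_t\|_2^2] \le C r_t^2$ trivially (norm equivalence in finite dimensions), so the noise is uniformly bounded; (ii) verify the step-size conditions for $r_t = r_0/t$; (iii) invoke the classical result (e.g. along the lines of Robbins--Siegmund / the supermartingale convergence argument underlying~\cite{robbins_stochastic_1951, nemirovski_robust_2009}) that projected SSGM with such steps converges a.s. to the unique minimizer of $F$ over $\mathcal{X}$; (iv) observe that a.s. convergence of $\{x_t\}$ to a single point implies the stopping criterion $\max_{l,m \in \{t-N,\dots,t\}} |x_l - x_m| \le \epsilon$ is eventually met, so the algorithm terminates w.p.\ $1$ at a point within $\epsilon$ of $x^*$.

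I expect the main obstacle to be the Model A analysis, and specifically making rigorous that the constrained maximizer over the $\mathcal{L}^q$-ball corresponds to an exact (not approximate) subgradient step of the \emph{right} magnitude in the dual norm — one must handle non-smoothness of $\|\cdot\|_p$ for $p = 1, \infty$ (where the subdifferential is set-valued and the ball has corners), possible non-uniqueness of the maximizer, and the degenerate regime where the ball engulfs $x_v$. A secondary subtlety is confirming strict convexity of $F$ (hence uniqueness of $x^*$) from $\bC_3$ alone: for $p = 1$ or $\infty$ the integrand $\|x - x_v\|_p$ is not itself strictly convex, so strict convexity of $F$ must come from averaging over a density — I would argue that along any line segment the second difference is strictly positive because the set of $x_v$ at which $\|\cdot - x_v\|_p$ is affine along that segment has $h_\mathcal{X}$-measure zero. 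Everything else is a direct appeal to stochastic-approximation theory.
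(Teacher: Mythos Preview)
Your overall strategy --- map {\sc ILV} to projected SSGM on $F(x)=\textup{E}_v[\|x-x_v\|_p]$ and invoke a stochastic-approximation convergence theorem --- is exactly the paper's, and your Model~B argument (using $\|g\|_q=1$ for subgradients of $\|\cdot\|_p$ away from the bliss point) is essentially complete and matches the paper's proof of Theorem~\ref{thm:pqother}. The genuine gap is in your handling of Model~A. Your parenthetical that the degenerate regime is a ``transient, finitely-occurring event away from $x^*$'' is wrong: a \emph{fresh} voter $v_t$ is sampled at every step, so at each $t$ there is positive probability that $x_{v_t}$ lands in the bad region near $x_t$, no matter how small $r_t$ is. The bad event occurs infinitely often almost surely; what saves the argument is not that it stops, but that its conditional probability decays like $r_t$.

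Moreover, the bad region is not merely ``the ball engulfs $x_v$'' except in the $(2,2)$ case. For $(p,q)=(1,\infty)$ it is the union of slabs $\{\exists m:\ |x_{v_t}^m-x_t^m|\le r_t\}$, and for $(\infty,1)$ it is the set where the two largest coordinates of $|x_{v_t}-x_t|$ differ by less than $r_t$; in both cases the voter's constrained optimum need not equal $x_t-r_tg_t$ for any $g_t\in\partial f_{v_t}(x_t)$. The paper handles this by proving, case by case (its Lemma~\ref{lem:whengoodcondition}), that $\textup{P}(\tilde g_{v_t}(x_t)\notin\partial f_{v_t}(x_t)\mid\mathcal{F}_t)\le Cr_t$ using the bounded density from $\bC_3$, and then writing $\tilde g_{v_t}=\bar g_t+z_t+b_t$ with a \emph{bias} term satisfying $\|b_t\|\le C'r_t$, whence $\sum_t r_t\|b_t\|\le C'\sum_t r_t^2<\infty$. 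This requires an SSGM convergence theorem that explicitly tolerates summable bias (the paper cites~\cite{jiang_scheduling_2010}); the unbiased version you gesture at in step~(iii) does not suffice. Your H\"older-duality/normal-cone sketch for the good region is fine in spirit, but you should expect to verify it case by case rather than by a blanket KKT appeal --- for $q\in\{1,\infty\}$ the ball has corners and the argmin can be non-unique, and indeed for general dual pairs $(p,q)$ the Model~A equivalence fails, which is precisely why the paper leaves that case open.
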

The proof is contained in the appendix. A sketch of the proof is as follows. For the given pairs $(p,\,q)$, we show that, except in certain `bad' regions, the update rule $x_{t+1} = \arg\min_x[ \|x - x_{v_t}\|_p : \|x - x_t\|_q \leq r_t]$ is equivalent to the stochastic subgradient method (SSGM) update rule $x_{t+1} = x_t - r_t g_t$, for some $g_t \in \partial \textup{E}_v[\|x - x_{v_t}\|_p]$, and that the probability of being in a `bad' region decreases fast enough as a function of $r_t$. We then leverage a standard SSGM convergence result to finish the proof. One natural question is whether the result extends to general dual norms $p,q$, where $1/p + 1/q = 1 $. Unfortunately, the update rule is not equivalent to SSGM in general, and we leave the convergence to the societal optimum for general $(p,q)$ as an open question.

Further, note that even if each voter could scale their utility function arbitrarily, the algorithm would converge to the same point.

However, the general result does hold for general dual norms $(p,\,q)$ if one assumes the alternative behavior model. 

\begin{restatable}{theorem}{thmpqother}\label{thm:pqother}
	Suppose that conditions $\bC_1$, $\bC_2$, and $\bC_3$ are satisfied, the voter utilities are $\mathcal{L}^p$ normed, and voters respond to query $\eqref{query}$ according to \textbf{Model B}. Then, {\sc ILV} with $\mathcal{L}^q$ neighborhoods converges to the societal optimal point w.p. $1$ for any $p>0$ and $q>0$ such that $1/p+1/q = 1$.
\end{restatable}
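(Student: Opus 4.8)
The plan is to show that, under Model~B with $\mathcal{L}^p$ normed utilities and $\mathcal{L}^q$ neighborhoods with $1/p+1/q=1$, the {\sc ILV} recursion coincides \emph{exactly} — not merely outside a vanishing ``bad'' region as in Theorem~\ref{thm:pqmain} — with the projected stochastic subgradient method (SSGM) for minimizing $F(x)\triangleq \textup{E}_v\|x-x_v\|_p$ over $\mathcal{X}$, and then to invoke the same standard SSGM convergence result used for Theorem~\ref{thm:pqmain}. Note first that the hypothesis $p,q>0$ with $1/p+1/q=1$ forces $1<p,q<\infty$, so both norms are smooth away from their centers and strictly convex.

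The key structural observation is that $\mathcal{L}^q$ is the dual norm of $\mathcal{L}^p$, and for a norm the subdifferential at a nonzero point consists exactly of the dual-norm unit vectors: for $x\neq x_v$, every $g\in\partial\|x-x_v\|_p$ satisfies $\|g\|_q=1$ (and $g^\top(x-x_v)=\|x-x_v\|_p$). Consequently, in Model~B the normalization $g_t/\|g_t\|_q$ is a no-op, and the voter's reported point is $x'_t=x_{t-1}-r_t\tilde g_t$ with $\tilde g_t\in\partial\|x_{t-1}-x_{v_t}\|_p$ (the uphill direction for utility $=$ downhill for the cost $\|\cdot-x_v\|_p$), so $x_t=[x_{t-1}-r_t\tilde g_t]_\mathcal{X}$. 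The degenerate event $x_{t-1}=x_{v_t}$ has probability zero at each step by $\bC_3$ and independence of $v_t$ from $x_{t-1}$, hence a.s.\ never occurs over the run (and even if it did, the voter simply does not move).

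Next I would identify this recursion as projected SSGM on $F$ over $\mathcal{X}$ and verify the hypotheses of the cited convergence theorem: (i)~$\mathcal{X}$ is compact and convex by $\bC_1$; (ii)~$\|\tilde g_t\|_q=1$ and $M<\infty$ make $\tilde g_t$ uniformly bounded; (iii)~$r_t=r_0/t$ satisfies $\sum_t r_t=\infty$ and $\sum_t r_t^2<\infty$; (iv)~since $\nabla_x\|x-y\|_p$ exists for $y\neq x$ and is uniformly bounded, dominated convergence gives $F\in C^1$ with $\nabla F(x)=\textup{E}_v[\nabla_x\|x-x_v\|_p]$, so $\textup{E}[\tilde g_t\mid x_{t-1}]=\nabla F(x_{t-1})$ and $\tilde g_t-\nabla F(x_{t-1})$ is a bounded martingale-difference sequence; and (v)~uniqueness of the minimizer: each $g_y(x)=\|x-y\|_p$ is affine in $x$ only along the line through $y$, and for $M\geq 2$ the set of $y$ lying on a given line has Lebesgue measure zero, so by $\bC_3$ the strict convexity (rotundity) of the $\mathcal{L}^p$ ball propagates to $F$; hence $F$ is strictly convex and has a unique minimizer $x^*$ over $\mathcal{X}$, which is precisely the SO point. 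Applying the SSGM theorem then yields $x_t\to x^*$ w.p.\ $1$. (For $M=1$ every $\mathcal{L}^p$ norm is $|\cdot|$ and the claim reduces to convergence to the median, handled as in Theorem~\ref{thm:other2}; the boundary exponents $p\in\{1,\infty\}$ are covered by Theorem~\ref{thm:pqmain}.)

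The main obstacle is essentially the technical bookkeeping in steps~(iv)--(v): confirming the interchange of expectation and differentiation, checking the boundedness/martingale-difference conditions in the precise form the cited SSGM theorem demands, and carrying out the strict-convexity argument that secures a \emph{unique} limit. The conceptual core — the exact equivalence to SSGM — is immediate from the dual-norm unit-vector identity, and this is exactly what makes the theorem easier than Theorem~\ref{thm:pqmain}, where the analogous equivalence only holds approximately.
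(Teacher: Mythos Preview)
Your proposal is correct and takes essentially the same approach as the paper: both rest on the key observation that $\|\nabla\|x-x_v\|_p\|_q=1$ away from a measure-zero degenerate set, so that the Model~B normalization is a no-op and the recursion is \emph{exactly} projected SSGM with $b_t\equiv 0$, after which the cited SSGM theorem applies. You frame this via the dual-norm characterization of the subdifferential of a norm whereas the paper computes it directly (Lemma~\ref{lem:constantqnorm}), and you add careful verification of the uniqueness hypothesis that the paper leaves implicit, but the argument is the same.
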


The proof is contained in the appendix. It uses the following property of $\mathcal{L}^p$ normed utilities: the $\mathcal{L}^q$ norm of the gradient of these utilities at any point other than the ideal point is constant. This fact, along with the voter behavior model, allows the algorithm to implicitly capture the magnitude of the gradient of the utilities, and thus a direct mapping to SSGM is obtained. 
Note that the above result holds even if we assume that a voter moves to her ideal point $x_v$ in case it falls within the neighborhood (since, as explained earlier, the probability of sampling such a voter decreases fast enough).

Next, we introduce another general class of utility functions, which we call \textit{Weighted Euclidean utilities}, for which one can obtain convergence to a unique solution. 
\begin{definition}{\bf Weighted Euclidean utilities.} Let the solution space $\mathcal{X}$ be decomposable into $K$ different sub-spaces, so that $x=(x^1,\dots, x^K)$ for each $x\in \mathcal{X}$ (where $\sum_{k=1}^K \textup{dim}(x^k) = M$). Suppose that the utility function of the voter $v$ is
	$$f_v(x)=-\sum_{k=1}^K \frac{w^k_v}{\|w_v\|_2}\|x^k-x^k_v\|_2.$$
	where $w_v$ is a voter-specific weight vector, then the function is a Weighted Euclidean utility function. We further assume that $w_v\in \mathcal{W}\subset \bbR_+^{K}$ and $x_v$ are independently drawn for each voter $v$ from a joint probability distribution with a bounded and measurable density function, with $\mathcal{W}$ nonempty, bounded, closed, and convex.
\end{definition}
This utility function can be interpreted as follows: the decision-making problem is decomposable into $K$ sub-problems, and each voter $v$ has an ideal point $x^k_v$ and a weight $w^k_v$ for each sub-problem $k$, so that the voter's disutility for a solution is the weighted sum of the Euclidean distances to the ideal points in each sub-problems. Such utility functions may emerge in facility location problems, for example, where voters have preferences on the locations of multiple facilities on a map. This utility form is also the one most closely related to the existing literature on Directional Equilibria and Quadratic Voting, in which preferences are linear. To recover the weighted linear preferences case, set $K = M$, with each sub-space of dimension 1. In this case, the following holds:

\begin{restatable}{proposition}{thmotherone}
	Suppose that conditions $\bC_1$, $\bC_2$, and $\bC_3$ are satisfied, the voter utilities are Weighted Euclidean, and voters correctly respond to query $\eqref{query}$ according to either \textbf{Model A} or \textbf{Model B}. Then, {\sc ILV} with $\mathcal{L}^2$ neighborhoods converges with probability $1$ to the societal optimal point. \label{thm:other1}
\end{restatable}

The intuition for the result is as follows: as long as the neighborhood does not contain the ideal point of the sampled voter, the correct response to query~\eqref{query} under weighted Euclidean preferences is to move the solution in the direction of the ideal point to the neighborhood boundary, which, as it turns out, is the same as the direction of the gradient. Thus with radius $r_t$, the effective movement is $\frac{\nabla f_v(x_t)}{||\nabla f_v(x_t)||_2}$. With (normalized) weighted Euclidean utilities, $\|\nabla f_v(x_t)\|_2 = 1$ everywhere. As before, even if the utilities were not normalized (i.e. not divided by $\|w\|_2$), the algorithm would converge to the same point, as if utility functions were normalized.

\subsection{Decomposable Utilities}
\label{sec:linfsubsection}
Next consider the general class of decomposable utilities, motivated by the fact that the algorithm with $\mathcal{L}^\infty$ neighborhoods is of special interest since they are easy for humans to understand: one can change each dimension up to a certain amount, independent of the others. 

\begin{definition}{\bf Decomposable utilities.}
	A voter utility function is decomposable if there exists concave functions $f^m_v$ for $m\in\{1 \dots M\}$ such that  $f_v(x)=\sum_{m=1}^Mf^m_v(x^m)$.
\end{definition}

If the utility functions for the voters are decomposable, then we can show that our algorithm under $\mathcal{L}^\infty$ neighborhoods converges to the vector of medians of voters' ideal points on each dimension. Suppose that $h_{\mathcal{X}}^m$ is the marginal density function of the random variable $x^m_v$, and let $\bar{x}^m$ be the set of medians of $x^m_v$. (By \textit{set} of medians, we mean the set of points such that, on each dimension, the mass of voters with ideal points above and below.)

\begin{restatable}{proposition}{thmothertwo}
	Suppose that conditions $\bC_1$, $\bC_2$, and $\bC_3$ are satisfied, the voter utilities are decomposable, and voters respond to query $\eqref{query}$ according to either \textbf{Model A} or \textbf{Model B}. Then, {\sc ILV} with $\mathcal{L}^\infty$ neighborhoods converges with probability $1$ to a point in the set of medians $\bar{x}$.  \label{thm:other2}
\end{restatable}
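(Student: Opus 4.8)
The plan is to exploit the fact that an $\mathcal{L}^\infty$ ball and a decomposable utility both separate across coordinates, reducing {\sc ILV} to $M$ independent one-dimensional stochastic processes, each of which is the stochastic subgradient method (SSGM) applied to a convex one-dimensional function whose minimizer is the median. Concretely, $\{s:\|s-x_{t-1}\|_\infty\le r_t\}$ is the box $\prod_{m=1}^M[x_{t-1}^m-r_t,\ x_{t-1}^m+r_t]$, so — assuming for now that $\mathcal{X}$ is a product of intervals, so that $[\cdot]_{\mathcal{X}}$ acts coordinatewise (the general convex case is addressed below) — query~\eqref{query} splits into the $M$ one-dimensional problems $(x'_t)^m=\arg\max_{|s-x_{t-1}^m|\le r_t} f_{v_t}^m(s)$ by decomposability of $f_{v_t}$. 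As each $f_{v_t}^m$ is concave with unique maximizer $x_{v_t}^m$, the coordinate-$m$ update moves $x_{t-1}^m$ toward $x_{v_t}^m$: under \textbf{Model A} by exactly $r_t$ when $|x_{t-1}^m-x_{v_t}^m|>r_t$ and onto $x_{v_t}^m$ otherwise, and under \textbf{Model B} by $r_t\,g_t^m/\|g_t\|_\infty$, a positive fraction of $r_t$ of sign $\sign(x_{v_t}^m-x_{t-1}^m)$. The point to stress is that, away from the truncation case, the move depends on $v_t$ only through $\sign(x_{v_t}^m-x_{t-1}^m)$ — ordinal information — which is why the limit is an ordinal statistic (the median) rather than the welfare maximizer.

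Next I would introduce $G^m(y)=\bbE_v|\,y-x_v^m|$, a convex one-dimensional function whose subdifferential at $y$ is $[\,2\Pr(x_v^m<y)-1,\ 2\Pr(x_v^m\le y)-1\,]$ and hence contains $0$ exactly on the set of medians of $x_v^m$, so that $\arg\min_y G^m(y)=\bar x^m$. (As a sanity check, $\mathcal{L}^1$ normed utilities are decomposable, and their societal optimum — covered by Theorem~\ref{thm:pqmain} with $(p,q)=(1,\infty)$ — is precisely this vector of coordinatewise medians.) The Model-A increment on coordinate $m$ is $r_t\sign(x_{v_t}^m-x_{t-1}^m)$, i.e. $-r_t$ times the subgradient $\sign(x_{t-1}^m-x_{v_t}^m)$ of $|\,\cdot-x_{v_t}^m|$ at $x_{t-1}^m$; conditioning on the history $\mathcal{F}_{t-1}$, the point $x_{v_t}^m$ is a fresh draw from the marginal $h_{\mathcal{X}}^m$, so $-r_t^{-1}\bbE[\,x_t^m-x_{t-1}^m\mid\mathcal{F}_{t-1}\,]\in\partial G^m(x_{t-1}^m)$, the increment has size at most $r_t$, and $r_t=r_0/t$ satisfies $\sum_t r_t=\infty$ and $\sum_t r_t^2<\infty$ — exactly the hypotheses of the standard SSGM convergence theorem used to prove Theorem~\ref{thm:pqmain}. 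That theorem yields $x_t^m\to\bar x^m$ almost surely for each $m$; intersecting the $M$ probability-one events gives $x_t\to\bar x$ almost surely, and almost-sure convergence of $\{x_t\}$ forces the stopping test $\max_{l,m\in\{t-N,\dots,t\}}|x_l-x_m|\le\epsilon$ to trigger eventually, so {\sc ILV} terminates with probability $1$.

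The main obstacle is controlling the ways in which the true update departs from a textbook SSGM step, and this is exactly where condition $\bC_3$ enters. In \textbf{Model A}, on the event $\{|x_{t-1}^m-x_{v_t}^m|\le r_t\}$ the update lands on $x_{v_t}^m$ rather than taking the clean $\pm r_t$ step; the discrepancy is at most $2r_t$ in size and, since $h_{\mathcal{X}}^m$ is bounded, this event has conditional probability at most $2r_t\|h_{\mathcal{X}}^m\|_\infty$, so its contribution to the Robbins--Siegmund almost-supermartingale estimate for $\|x_t^m-\bar x^m\|^2$ is $O(r_t^2)$ — summable, hence harmless — which is precisely the ``bad region'' bound used in Theorems~\ref{thm:pqmain}--\ref{thm:other1}. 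In \textbf{Model B} there is the additional data-dependent factor $g_t^m/\|g_t\|_\infty\in(0,1]$; since it only rescales a correctly-signed increment by a bounded positive amount, one verifies that the conditional drift still points toward $\bar x^m$ and the step-size conditions survive, with the bounded density again handling the degenerate case $x_{v_t}^m=x_{t-1}^m$. Finally, when $\mathcal{X}$ is merely convex the projection no longer separates across coordinates; here I would argue, as in the earlier proofs, that the projection does not change the limit — the iterates eventually remain where the relevant coordinates are unconstrained, using that every $x_v\in\mathcal{X}$ — so the coordinatewise analysis still delivers the claimed point of $\bar x$.
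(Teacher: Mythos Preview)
Your proposal is correct and follows essentially the same approach as the paper: separate the $\mathcal{L}^\infty$ ball and the decomposable utility across coordinates, recognize each coordinate update as SSGM on the absolute-value cost $|y-x_v^m|$ (whose societal minimizer is the median), and control the ``ideal point too close'' event via the bounded-density assumption $\bC_3$, exactly as in the $(p,q)=(1,\infty)$ case of Theorem~\ref{thm:pqmain}. Your sketch is in fact more explicit than the paper's---which simply invokes the Theorem~\ref{thm:pqmain} argument without naming $G^m$ or discussing the Model~B rescaling and the non-box projection---though both treatments leave the Model~B coupling through $\|g_t\|_\infty$ at the level of a drift heuristic rather than a full SSGM verification.
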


Although simply eliciting each agent's optimal solution and computing the vector of median allocations on each dimension is a viable approach in the case of decomposable utilities, deciding an optimal allocation across multiple dimensions is a more challenging cognitive task than deciding whether one wants to increase or decrease each dimension relative to the current solution (see Section~\ref{sec:againstfullelicit_time} for experimental evidence). In fact, in this case, the algorithm can be run separately for each dimension, so that each voter expresses her preferences on only one dimension, drastically reducing the cognitive burden of decision-making on the voter, especially in high dimensional settings like budgeting.

\subsection{Equivalence to Directional Equilibrium}
\label{sec:equiv}

As discussed in Section~\ref{sec:related}, our algorithm, with $\mathcal{L}^2$-norm neighborhoods, is related to an algorithm, NGA, to find what are called Directional Equilibria in literature. Prior work mostly focuses on the properties of the fixed point, with discussion of the proposed algorithm limited to simulations. We show that with the radius decreasing as $\mathcal{O}(\frac{1}{t})$, the algorithm indeed finds directional equilibria in the following sense: if under a few conditions a trajectory of the algorithm converges to a point, then that point is a directional equilibrium. 

\begin{restatable}{theorem}{thmequiv}
	Suppose that $\bC_1$, $\bC_2$, and $\bC_3$ are satisfied, and let $G(x) \triangleq \bbE_v\left[\frac{\nabla f_v(x)}{\| \nabla f_v(x)\|_2}\right]$. Suppose, $G(x)$ is uniformly continuous, $\mathcal{L}^2$ movement norm constraints are used, and voters move according to \textbf{Model B}. If a trajectory $\{x\}_{t=1}^\infty$ of the algorithm converges to $x^*$, i.e. $x_t \to x^*$, then $x^*$ is a directional equilibrium, i.e. $G(x^*) = 0$. 
	\label{thm:equiv}
\end{restatable}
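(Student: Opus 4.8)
\section*{Proof proposal for Theorem~\ref{thm:equiv}}

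The plan is to run a telescoping-plus-martingale (Robbins--Monro style) argument that exploits the specific rate $r_t = r_0/t$, for which $\sum_t r_t = \infty$ while $\sum_t r_t^2 < \infty$; these are exactly the ingredients that force the mean-field drift $G$ to govern the limit set. Set the filtration $\mathcal{F}_t = \sigma(x_0, v_1,\dots,v_t)$, so that $x_t$ is $\mathcal{F}_t$-measurable and $v_t$ is drawn from the population independently of $\mathcal{F}_{t-1}$. Under \textbf{Model B} with $\mathcal{L}^2$ constraints the update reads $x_t = \big[\, x_{t-1} + r_t\, u_t \,\big]_{\mathcal{X}}$ with $u_t = g_t/\|g_t\|_2$, $g_t\in\partial f_{v_t}(x_{t-1})$, and $\|u_t\|_2 = 1$. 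As noted in the excerpt, for (strictly concave) utilities the normalized gradient is, at any fixed point, well defined for all but a measure-zero set of voters and, under a bounded density, integrable, so
\[
\bbE[u_t \mid \mathcal{F}_{t-1}] \;=\; \bbE_v\!\left[\tfrac{\nabla f_v(x)}{\|\nabla f_v(x)\|_2}\right]\Big|_{x = x_{t-1}} \;=\; G(x_{t-1}).
\]
Write $u_t = G(x_{t-1}) + \eta_t$, so that $(\eta_t)$ is a martingale-difference sequence with $\|\eta_t\|_2 \le 2$.

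Now suppose, toward a contradiction, that on an event of positive probability $x_t \to x^*$ with $\delta := \|G(x^*)\|_2 > 0$; put $u := G(x^*)/\delta$. By uniform continuity of $G$, choose $\rho > 0$ with $\|G(x) - G(x^*)\|_2 \le \delta/2$ on the ball $B(x^*,\rho)$; then $\langle G(x), u\rangle \ge \delta - \delta/2 = \delta/2$ there. On the convergence event there is an a.s.\ finite (random) time $T_0$ with $x_t \in B(x^*,\rho)$ for all $t \ge T_0$. Assume first $x^* \in \mathrm{int}\,\mathcal{X}$; then (since $r_t\to 0$ and $x_{t-1}\to x^*$) we may enlarge $T_0$ so that $x_{t-1}+r_t u_t \in \mathcal{X}$ for $t>T_0$ and the projection is inactive, giving $x_t - x_{t-1} = r_t u_t$. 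Taking inner products with $u$ and summing, for $n>T_0$,
\[
\langle x_n - x_{T_0},\, u\rangle \;=\; \sum_{t=T_0+1}^{n} r_t\, \langle G(x_{t-1}), u\rangle \;+\; \Big\langle \sum_{t=T_0+1}^{n} r_t\, \eta_t,\; u\Big\rangle .
\]
The left-hand side converges (to $\langle x^* - x_{T_0}, u\rangle$) and so is bounded. The $\bbR^M$-valued martingale $\sum_t r_t \eta_t$ is $L^2$-bounded, since $\|\eta_t\|_2\le 2$ and $\sum_t r_t^2<\infty$, hence converges a.s.\ to a finite limit; projecting onto $u$, the last term above converges as $n\to\infty$ for every (possibly random) unit vector $u$. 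Therefore, on the convergence event, $\sum_{t=T_0+1}^{n} r_t\langle G(x_{t-1}),u\rangle$ converges to a finite limit; but each summand is $\ge (\delta/2)\,r_t$ once $t-1\ge T_0$, and $\sum_t r_t = r_0\sum_t 1/t = \infty$, so the partial sums diverge --- a contradiction. Hence a.s.\ any limit $x^*$ of a trajectory satisfies $G(x^*)=0$.

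The main obstacle is the projection onto $\mathcal{X}$ when the limit lies on $\partial\mathcal{X}$: there the projection residual $p_t := (x_{t-1}+r_t u_t) - x_t$ can be of the same order $r_t$ as the drift and must be retained in the telescoping identity. One would control it via the variational inequality $\langle p_t,\, y - x_t\rangle \le 0$ for all $y\in\mathcal{X}$ (in particular $y = x^*$) together with $x_t\to x^*$; in the cases of primary interest --- spatial/distance utilities, where $\nabla f_v(x)$ at a boundary point points toward the voter's ideal point, which by $\bC_2$ lies in $\mathcal{X}$ --- the drift has no component along the outward normal cone, so the residual sum along $u$ stays bounded and the same contradiction goes through; for fully general concave utilities one should either restrict to interior limits or impose such a structural hypothesis. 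The two steps needing care are the identity $\bbE[u_t\mid\mathcal{F}_{t-1}] = G(x_{t-1})$ (a.e.\ differentiability of $f_v$ and integrability of the normalized gradient, as flagged in the excerpt) and the uniform lower bound $\langle G(x),u\rangle\ge\delta/2$ near $x^*$ --- the latter being precisely where uniform continuity of $G$ enters, converting ``$G(x^*)\ne 0$'' into a persistent, non-vanishing drift over an entire neighborhood, which is what drives the divergence of the drift partial sums.
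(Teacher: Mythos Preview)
Your proof is correct and follows essentially the same contradiction argument as the paper: both show that if $G(x^*)\neq 0$, uniform continuity gives a neighborhood on which the drift has a persistent nonzero component in a fixed direction, so the telescoped drift sum diverges (since $\sum r_t=\infty$) while the noise is controlled, contradicting $x_t\to x^*$. Your execution is in fact somewhat cleaner---you project onto the unit vector $u=G(x^*)/\|G(x^*)\|$ and invoke $L^2$-martingale convergence (via $\sum r_t^2<\infty$), whereas the paper works coordinatewise and applies Hoeffding's inequality more informally, and like the paper you (correctly) flag that the projection onto $\mathcal{X}$ at boundary limits is the one step not fully handled.
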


The proof is in the appendix. It relies heavily on the continuity assumption: if a point $x$ is not a directional equilibrium, then the algorithm with step sizes $\mathcal{O}(\frac{1}{t})$ will with probability $1$ leave any small region surrounding $x$: the net drift of the voter movements is away from the region. We note that the necessary assumptions hold for all utility functions for which convergence holds, using the $\mathcal{L}^2$ norm algorithm (e.g. weighted Euclidean utilities). It is further possible to characterize other utility functions for which the equivalence holds: with appropriate conditions on the distribution of voters and how $f$ differs among voters, the conditions on $G$ can be met.

We further conjecture that even under voter Model A, if Algorithm~\ref{alg:votermovetomin} converges, the fixed point is a Directional Equilibrium. Note that as $r_t \to 0$, $f_v(y)$ can be linearly approximated by the first term of the Taylor series expansion around $x$, for $y \in \{ s : ||s - x ||_2 \leq r_t\}$. Then, to maximize $f_v(y)$ in the region, if the region does not contain $x_v$  voter $v$ chooses $y^*$ $s.t.$ $y^* - x \approx r_t\frac{\nabla f(x)}{||\nabla f(x)||_2}$, i.e., the voter moves the solution approximately in the direction of her gradient to the neighborhood boundary. 

A single step of our algorithm with $\mathcal{L}^2$ neighborhoods is similar to Quadratic Voting~\cite{lalley_quadratic_2015,tideman_efficient_2016} for the same reason. Independently of our work,~\citet{benjamin_relationship_2017} formalize the relationship between the Normalized Gradient Ascent mechanism and Quadratic Voting.

\section{Experiments with Budgets}
\label{sec:expdesc}
We built a voting platform and ran a large scale experiment, along with several extensive pilots, on Amazon Mechanical Turk (\url{https://www.mturk.com}). Over 4,000 workers participating in total counting pilots and the final experiment, with over 2,000 workers participating in the final experiment. The design challenges we faced and voter feedback we received provide important lessons for deploying such systems in a real-world setting.

First we present a theoretical model for our setting. We consider a budget allocation problem on $M$ items, where the items may include both expenditures and incomes. One possibility is to define $\mathcal{X}$ as the space of feasible allocations, such as those below a spending limit, and to run the algorithm as defined, with projections. However, in such cases, it may be difficult to theorize about how voters behave; e.g. if voters knew their answers would be projected onto a budget balanced set, they may respond differently.

Rather, we consider an unconstrained budget allocation problem, one in which a voter's utility includes a term for the budget deficit. Let $\mathcal{E} \subseteq \{1 \dots M\},\, \mathcal{I} = \{1 \dots M\}\setminus \mathcal{E}$ be the expenditure and income items, respectively. Then the general \textit{budget utility function} is $f_v(x)= g_v(x) - d(\sum_{e\in \mathcal{E}} x^e -\sum_{i\in \mathcal{I}}x^i)$, where $d$ is an increasing function on the deficit.

For example, suppose a voter's disutility was proportional to the \textit{square} of the budget deficit (she especially dislikes large budget deficits); then, this term adds complex dependencies between the budget items. In general, nothing is known about convergence of Algorithm~\ref{algo:demeq} with such utilities, as the deficit term may add complex dependencies between the dimensions. However, if the voter utility functions are decomposable across the dimensions and $\mathcal{L}^\infty$ neighborhoods used, then the results of Section~\ref{sec:linfsubsection} can be applied. We propose the following class of decomposable utility functions for the budgeting problem, achieved by assuming that the cost for the deficit is linear, and call the class ``decomposable with a linear cost for deficit," or DLCD.

\begin{definition}
	Let $f_v(x)$ be DLCD if 
	$$f_v(x)=\sum_{m=1}^M f^m_v(x^m) - w_v\left(\sum_{e \in \mathcal{E}} x^e -\sum_{i \in \mathcal{I}} x^i\right),$$
	where $f^m_{v}$ is a concave function for each $m$ and $w_v \in \bbR_+$. 
\end{definition}

In the experiments discussed below in the budget setting, {\sc ILV} consistently and robustly converges with $\mathcal{L}^\infty$ norm neighborhoods. Further, it approximately converges to the medians of the optimal solutions (which are elicited independently), as theorized in Section~\ref{sec:linfsubsection}. Such a convergence pattern suggests the validity of the DLCD model, though we do not formally analyze this claim.

\subsection{Experimental Setup}

We asked voters to vote on the U.S. Federal Budget across several of its major categories: National Defense; Healthcare; Transportation, Science, \& Education; and Individual Income Tax (Note that the US Federal Government cannot just decide to set tax receipts to some value. We asked workers to assume tax rates would be increased or decreased at proportional rates in hopes of affecting receipts.)

 This setting was deemed the most likely to be meaningful to the largest cross-section of workers and to yield a diversity of opinion, and we consider budgets a prime application area in general. The specific categories were chosen because they make up a substantial portion of the budget and are among the most-discussed items in American politics. We make no normative claims about running a vote in this setting in reality, and Participatory Budgeting has historically been more successful at a local level.

One major concern was that with no way to validate that a worker actually performed the task (since no or little movement is a valid response if the solution presented to the worker was near her ideal budget), we may not receive high-quality responses. This issue is especially important in our setting because a worker's actions influence the initial solution future workers see. We thus restricted the experiment to workers with a high approval rate and who have completed over $500$ tasks on Mechanical Turk (MTurk). Further, we offered a bonus to workers for justifying their movements well, and more than $80\%$ of workers qualified, suggesting that we also received high-quality movements. The experiment was restricted to Americans to best ensure familiarity with the setting. Turkprime (\url{https://www.turkprime.com}) was used to manage postings and payment.

\subsection{Experimental Parameters}

Our large scale experiment included 2,000 workers and ran over a week in real-time. Participants of any of the pilots were excluded. We tested the $\mathcal{L}^1, \mathcal{L}^2$, and $\mathcal{L}^\infty$ mechanisms, along with a ``full elicitation'' mechanism in which workers reported their ideal values for each item, and a ``weight'' in $[0,10]$ indicating how much they cared about the actual spending in that item being close to their stated value. 

To test repeatability of convergence, each of the constrained mechanisms had three copies, given to three separate groups of people. Each group consisted of two sets with different starting points, with each worker being asked to vote in each set in her assigned group. Each worker only participates as part of one group, and cannot vote multiple times.

 We used a total of three different sets of starting points across the three groups, such that each group shared one set of starting points with each of the other two groups. This setup allowed testing for repeatability across different starting points and observing each worker's behavior at two points. Workers in one group in each constrained mechanism type were also asked to do the full elicitation after submitting their movements for the constrained mechanism, and such workers were paid extra. These copies, along with the full elicitation, resulted in 10 different mechanism instances to which workers could be allocated, each completed by about 200 workers. 

To update the current point, we waited for 10 submissions and then updated the point to their average. This averaging explains the step-like structure in the convergence plots in the next section. The radius was decreased approximately every 60 submissions, $r_t\approxeq \frac{r_0}{\lceil t/60 \rceil}$. The averaging and slow radius decay rate were implemented in response to observing in the pilots that the initial few voters with a large radius had a disproportionately high impact, as there were not enough subsequent voters to recover from large initial movements away from an eventual fixed point (though in theory this would not be a problem given enough voters). We note that the convergence results for stochastic subgradient methods trivially extend to cover these modifications: the average movement over a batch of submissions starting at the same point is still in expectation a subgradient, and the stepped radius decrease still meets the conditions for valid step-sizes.

\subsection{User Experience}
\begin{figure*}[htb!]
	\centering
	\includegraphics[width=6in]{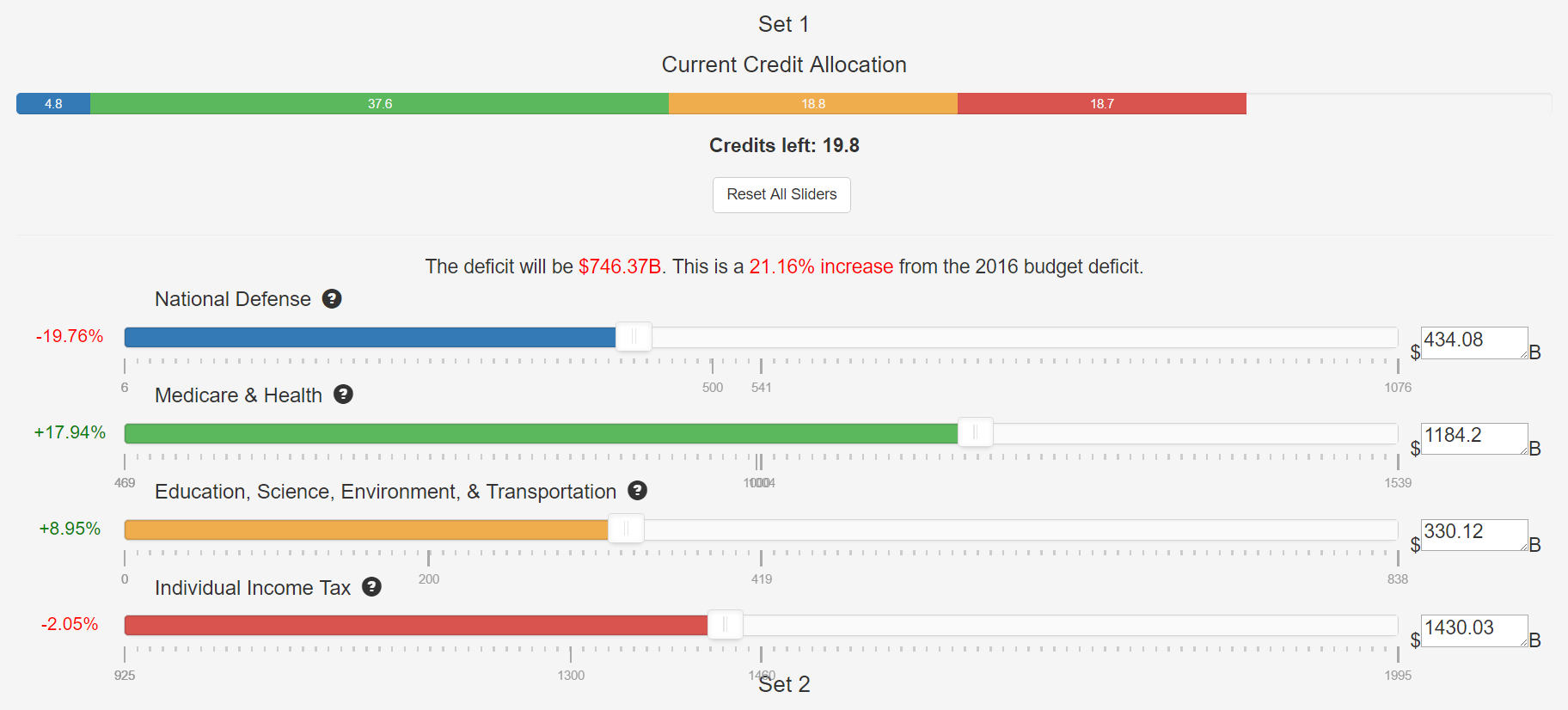}
	\caption{UI Screenshot for 1 set of the $\mathcal{L}^2$ Mechanism}
	\label{fig:screenshots}
\end{figure*}

As workers arrived, they were randomly assigned to a mechanism instance. They had a roughly equal probability of being assigned to each instance, with slight deviations in case an instance was ``busy'' (another user was currently doing the potential $10^\text{th}$ submission before an update of the instance's current point) and to keep the number of workers in each instance balanced. Upon starting, workers were shown mechanism instructions. We showed the instructions on a separate page so as to be able to separately measure the time it takes to read \& understand a given mechanism, and the time it takes to do it, but we repeated the instructions on the actual mechanism page as well for reference.

On the mechanism page, workers were shown the current allocation for each of the two sets in their group. They could then move, through sliders, to their favorite allocation under the movement constraint. We explained the movement constraints in text and also automatically calculated for them the number of ``credits'' their current movements were using, and how many they had left. Next to each budget item, we displayed the percentage difference of the current value from the 2016 baseline federal budget, providing important context to workers (The 2016 budget estimate was obtained from \url{http://federal-budget.insidegov.com/l/119/2016-Estimate} and \url{http://atlas.newamerica.org/education-federal-budget}). We also provided short descriptions of what goes into each budget item as scroll-over text. The resulting budget deficit and its percent change were displayed above the sliders, assuming other budget items are held constant.

For the full elicitation mechanism, workers were asked to move the sliders to their favorite points with no constraints (the sliders went from $\$0$ to twice the 2016 value in that category), and then were asked for their ``weights'' on each budget item, including the deficit. Figure~\ref{fig:screenshots} shows part of the interface for the $\mathcal{L}^2$ mechanism, not including instructions, with similar interfaces for the other constrained mechanisms. The full elicitation mechanism additionally included sliders for items' weights. On the final page, workers were asked for feedback on the experiment. 

A full walk-through of the experiment with screenshots and link to an online demo is available in the Appendix. We plan on posting the data, including feedback. In general, workers seemed to like the experiment, though some complained about the constraints, and others were generally confused. Some expressed excitement about being asked their views in an innovative manner and suggested that everyone could benefit from participating as, at the least, a thought exercise. The feedback and explanations provided by workers were much longer than we anticipated, and they convince us of the procedure's civic engagement benefits.

\section{Results and Analysis}
\label{sec:convergencepatterns}
We now discuss the results of our experiments.
\subsection{Convergence}
\begin{figure}[htb!]
	\centering
	\subfloat[$\mathcal{L}^1$]{\includegraphics[width=5.5in]{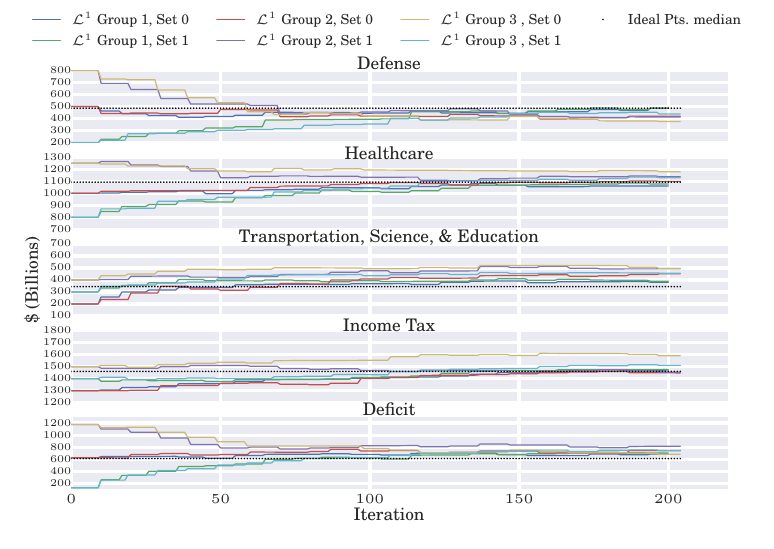}
	}
	\caption{Solution over time for each mechanism type}
\end{figure}
\begin{figure}
	\ContinuedFloat
	\centering
	\subfloat[$\mathcal{L}^2$]{\includegraphics[width=5.5in]{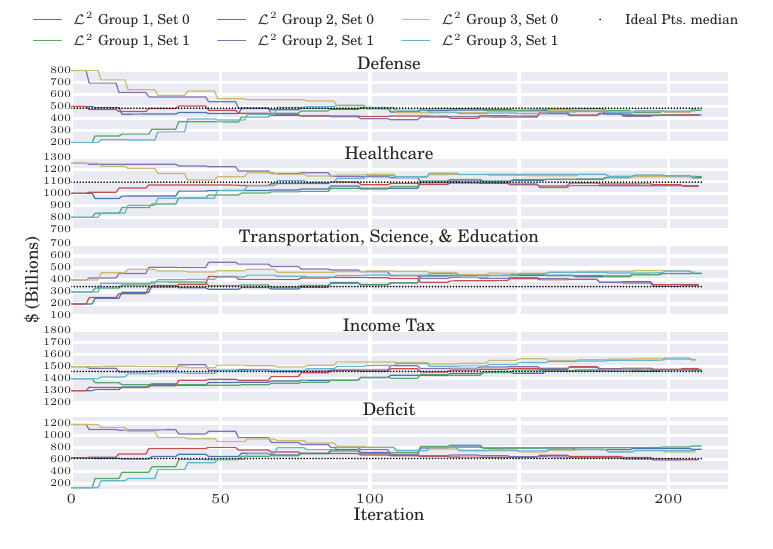}
	}
	
	\subfloat[$\mathcal{L}^\infty$]{\includegraphics[width=5.5in]{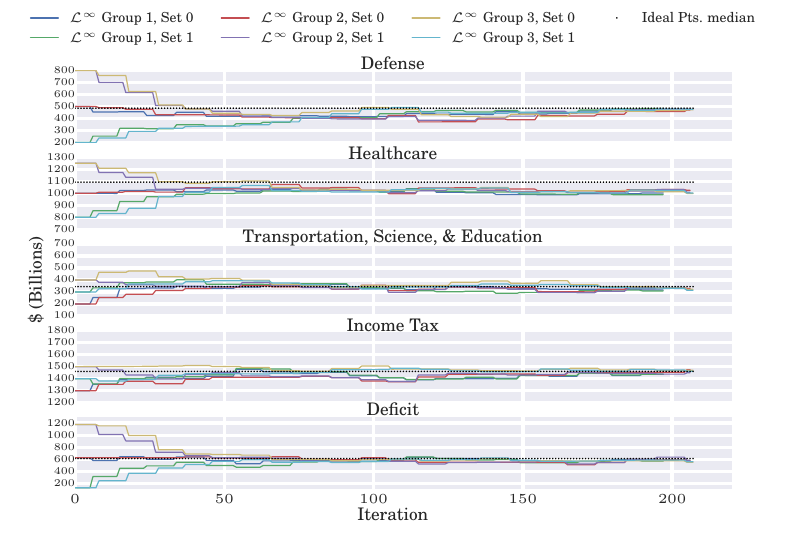}}
	\caption{(Continued) Solution over time for each mechanism type}
	\label{fig:trajectories}
\end{figure}

\begin{figure}[htb!]
	\centering
	\subfloat[$\mathcal{L}^1$]{\includegraphics[width=5.5in]{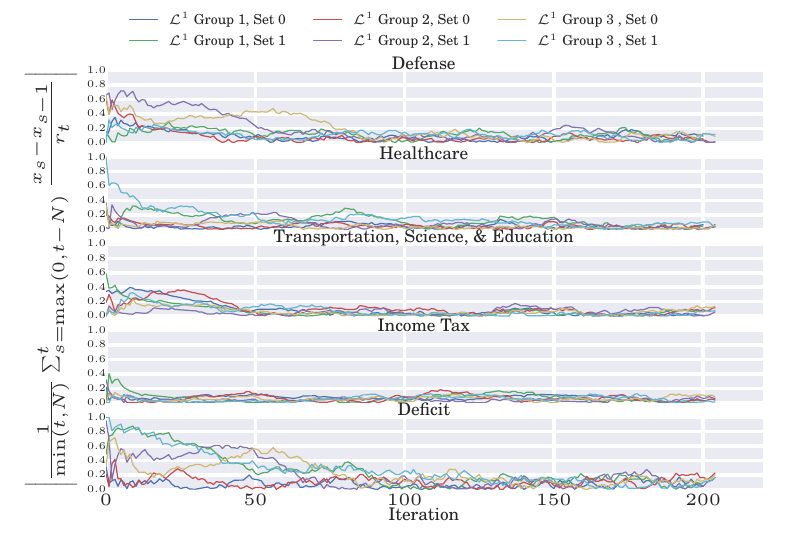}
	}
	\caption{Net normalized movement in window of $N=30$}
\end{figure}
\begin{figure}
	\ContinuedFloat
	\centering
	\subfloat[$\mathcal{L}^2$]{\includegraphics[width=5.5in]{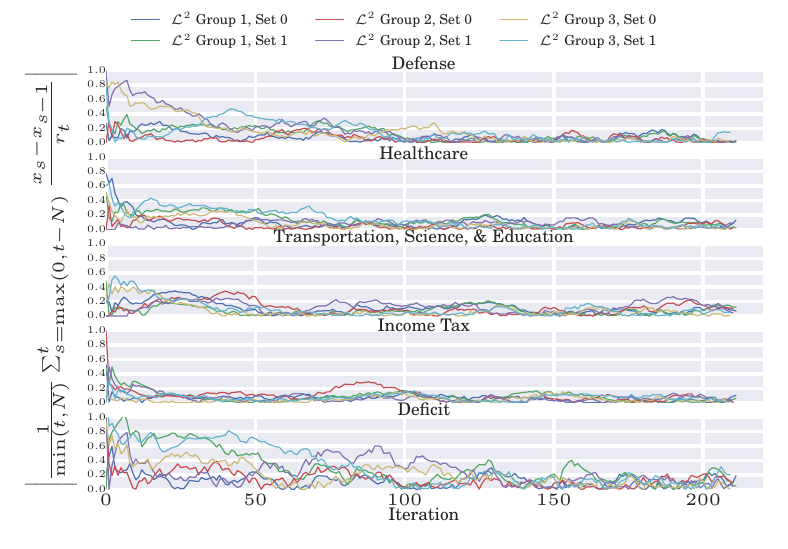}
	}
	
	\subfloat[$\mathcal{L}^\infty$]{\includegraphics[width=5.5in]{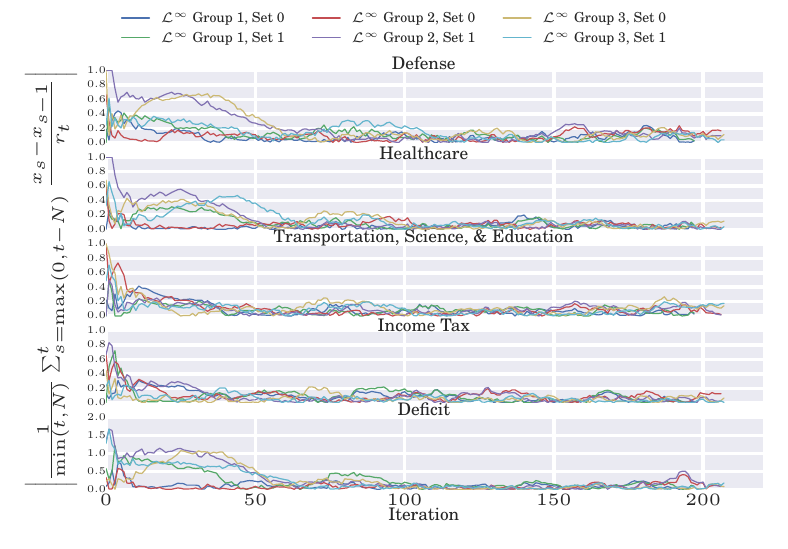}}
	\caption{(Continued) Net normalized movement in window of $N=30$.}
	\label{fig:cumsummovement}
\end{figure}
One basic test of a voting mechanism is whether it produces a consistent and unique solution, given a voting population and their behaviors. If an election process can produce multiple, distinct solutions purely by chance, opponents can assail any particular solution as a fluke and call for a re-vote. The question of whether the mechanisms consistently converge to the same point thus must be answered before analyzing properties of the equilibrium point itself. In this section, we show that the $\mathcal{L}^2$ and $\mathcal{L}^1$ algorithms do not appear to converge to a unique point, while the $\mathcal{L}^\infty$ mechanism converges to a unique point across several initial points and with distinct worker populations.

The solutions after each voter for each set of starting points, across the 3 separate groups of people for each constrained mechanism are shown in Figure~\ref{fig:trajectories}. Each plot shows all the trajectories with the given mechanism type, along with the median of the ideal points elicited from the separate voters who only performed the full elicitation mechanism. Observe that the three mechanisms have remarkably different convergence patterns. In the $\mathcal{L}^1$ mechanism, not even the sets done by the same group of voters (in the same order) converged in all cases. In some cases, they converged for some budget items but then diverged again. In the $\mathcal{L}^2$ mechanism, sets done by the same voters starting from separate starting points appear to converge, but the three groups of voters seem to have settled at two separate equilibria in each dimension. Under the $\mathcal{L}^\infty$ neighborhood, on the other hand, all six trajectories, performed by three groups of people, converged to the same allocation very quickly and remained together throughout the course of the experiment. Furthermore, the final points, in all dimensions except Healthcare, correspond almost exactly to the median of values elicited from the separate set of voters who did only the full elicitation mechanism. For Healthcare, the discrepancy could result from biases in full elicitation (see Section~\ref{sec:againstfullelicit_biases}), though we make no definitive claims. These patterns shed initial insight on how the use of $\mathcal{L}^2$ constraints may differ from theory in prior literature and offer justification for the use of DLCD utility models and the $\mathcal{L}^\infty$ constrained mechanism.

One natural question is whether these mechanisms really have converged, or whether if we let the experiment continue, the results would change. This question is especially salient for the $\mathcal{L}^2$ trajectories, where trajectories within a group of people converged to the same point, but trajectories between groups did not. Such a pattern could suggest that our results are a consequence of the radius decreasing too quickly over time, or that the groups had, by chance, different distributions of voters which would have been corrected with more voters. However, we argue that such does not seem to be the case, and that the mechanism truly found different equilibria. 

We can test whether the final points for each trajectory are stable by checking the net movement in a window, normalized by each voter's radius, i.e. $\frac{1}{N}\sum_{s=t-N}^t \frac{x_s-x_{s-1}}{r_s}$, for some $N$. If voters in a window are canceling each other's movements, then this value would go to 0, and the algorithm would be stable even if the radius does not decrease. The notion is thus robust to apparent convergence just due to decreasing radii.  The net movement normalized in a sliding window of 30 voters, for each dimension and mechanism, is shown in Figure~\ref{fig:cumsummovement}. It seems to almost die down for almost all mechanisms and budget items, except for a few cases which do not change the result. We conclude it likely that the mechanisms have settled into equilibria which are unlikely to change given more voters. 
\FloatBarrier
\subsection{Understanding Voter Behavior}
\label{sec:behavioranalysis}
A mechanism's practical impact depends on more than whether it consistently converges, however. We now turn our attention to understanding how voters behave under each mechanism and whether we can learn anything about their utility functions from that behavior. We find that voters understood the mechanisms but that their behaviors suggest large indifference regions, and that the full elicitation scheme is susceptible to biases that can skew the results.

\subsubsection{Voter understanding of mechanisms}

One important question is whether, given very little instruction on how to behave, voters understand the mechanisms and act approximately optimally under their (unknown to us) utility function. This section shows that the voters behavior follows what one would expect in one important respect: how much of one's movement budget the voter used on each dimension, given the constraint type. 

Regardless of the exact form of the utility function, one would expect that, in the $\mathcal{L}^1$ constrained mechanism, a voter would use most of her movement credits in the dimension about which she cares most. In fact, in either the Weighted Euclidean preferences case (and with `sub-space' being a single dimension) or with a small radius with $\mathcal{L}^1$ constraints, a voter would move only on one dimension. With $\mathcal{L}^2$ constraints, one would expect a voter to apportion her movement more equally because she pays an increasing marginal cost to move more in one dimension (people were explicitly informed of this consequence in the instructions). Under the Weighted Euclidean preferences model with $\mathcal{L}^2$ constraints, a voter would move in each dimension proportional to her weight in that dimension. Finally, with $\mathcal{L}^\infty$ constraints, a voter would move, in all dimensions in which she is not indifferent, to her favorite point in the neighborhood for that dimension (most likely an endpoint), independently of other dimensions. One would thus expect a more equal distribution of movements. 

Figure~\ref{fig:experiment2percentmovement_allmechanisms} shows the average movement (as a fraction of the voter's total movement) by each voter for the dimension she moved most, second, third, and fourth, respectively, for each constrained mechanism. We reserve discussion of the full elicitation weights for Section~\ref{sec:againstfullelicit_biases}. The movement patterns indicate that voters understood the constraints and moved accordingly -- with more equal movements across dimensions in $\mathcal{L}^2$ than in $\mathcal{L}^1$, and more equal movements still in $\mathcal{L}^\infty$. We dig deeper into user utility functions next, but can conclude that, regardless of their exact utility functions, voters responded to the constraint sets appropriately.  

\begin{figure}[htb!]
	\centering
	\includegraphics[width=4.5in]{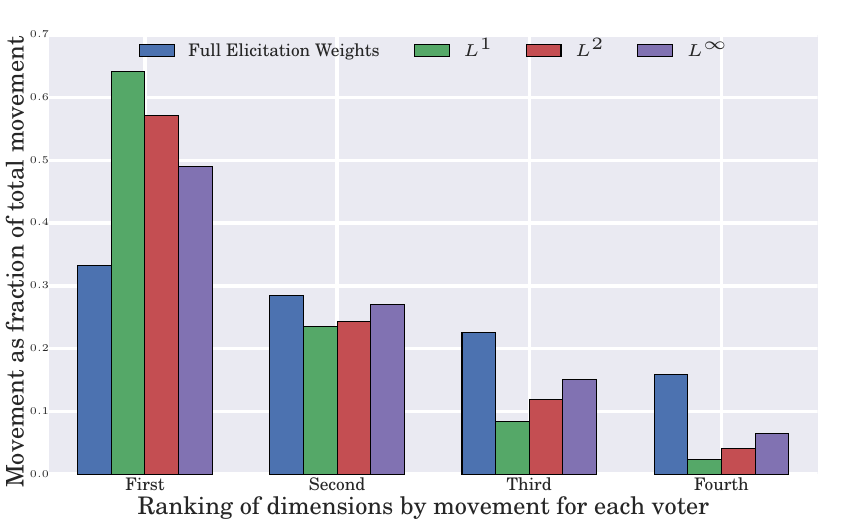}
	\caption{Average movement in dimension over total movement for each voter, with dimensions sorted}
	\label{fig:experiment2percentmovement_allmechanisms}
\end{figure}

\subsubsection{Large indifference regions}
\label{sec:largeindifference}
Although it is difficult to extract a voter's full utility function from their movements, the separability of dimensions (except through the deficit term) under the $\mathcal{L}^\infty$ constraint allows us to test whether voters behave according to some given utility model in that dimension, without worrying about the dependency on other dimensions. 


Figure~\ref{fig:histogramofcreditslinf} shows, for the $\mathcal{L}^\infty$ mechanism, a histogram of the movement on a dimension as a fraction of the radius (we find no difference between dimensions here). Note that a large percentage of voters moved very little on a dimension, even in cases where their ideal point in that dimension was far away (defined as being unreachable under the current radius). This result cannot be explained away by workers clicking through without performing the task: almost all workers moved at least one dimension, and, given that a worker moved in a given dimension, it would not explain smaller movements being more common than larger movements. That this pattern occurs in the $\mathcal{L}^\infty$ mechanism is key -- if a voter feels \textit{any} marginal disutility in a dimension, she can move the allocation without paying a cost of more limited movement in other dimensions. We conclude that, though voters may share a single ideal point for a dimension when asked for it, they are in fact relatively indifferent over a potentially large region -- and their actions reflect so.

We further analyze this claim in Appendix Section~\ref{sec:indifferenceregion}, looking at the same distribution of movement but focusing on workers who provided a text explanation longer than (and shorter than, separately) the median explanation of 197 characters. (We assume that the voters who invested time in providing a more thorough explanation than the average worker also invested time in moving the sliders to a satisfactory point, though this assumption cannot be validated.) Though there are some differences (those who provide longer explanations also tend to use more of their movement), the general pattern remains the same; only about $40\%$ of workers who provided a long explanation and were far away from their ideal point on a dimension used the full movement budget. This pattern suggests that voters are relatively indifferent over large regions. 

Furthermore, this lack of movement is correlated with a voter's weights when she was also asked to do the full elicitation mechanism. Conditioned on being far from her ideal point, when a voter ranked an item as one of her top two important items (not counting the deficit term), she moved an average of $74\%$ of her allowed movement in that dimension; when she ranked an item as one of least two important items, she moved an average of $61\%$, and the difference is significant through a two sample t-test with $p = .013$. We find no significant difference in movement within the top two ranked items or within the bottom two ranked items. This connection suggests that one can potentially determine which dimensions a voter cares about by observing these indifference regions and movements, even in the $\mathcal{L}^\infty$ constrained case. One caveat is that the differences in effects are not large, and so at the individual level inference of how much an individual cares about one dimension over another may be noisy. On the aggregate, however, such determination may prove useful.

Furthermore, we note that while such indifference regions conflict with the utility models under which the $\mathcal{L}^2$ constraint mechanism converges in theory, it fits within the DLCD framework introduced in Section~\ref{sec:expdesc}.

\begin{figure}[htb!]
	\centering
	\includegraphics[width=4.5in]{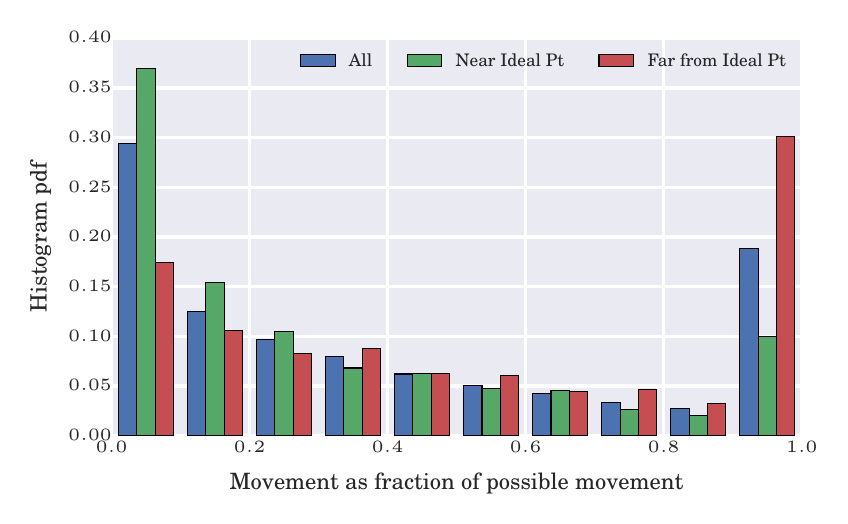}
	\caption{Fraction of possible movement in each dimension in $\mathcal{L}^\infty$, conditioned on distance to ideal pt. The `All' condition contains data from all three $\mathcal{L}^\infty$ instances, whereas the others only from the instance that also did full elicitation.}
	\label{fig:histogramofcreditslinf}
\end{figure}

\FloatBarrier

\subsubsection{Mechanism time}
\label{sec:againstfullelicit_time}
\begin{figure}[htb!]
	\centering
	\includegraphics[width=4.5in]{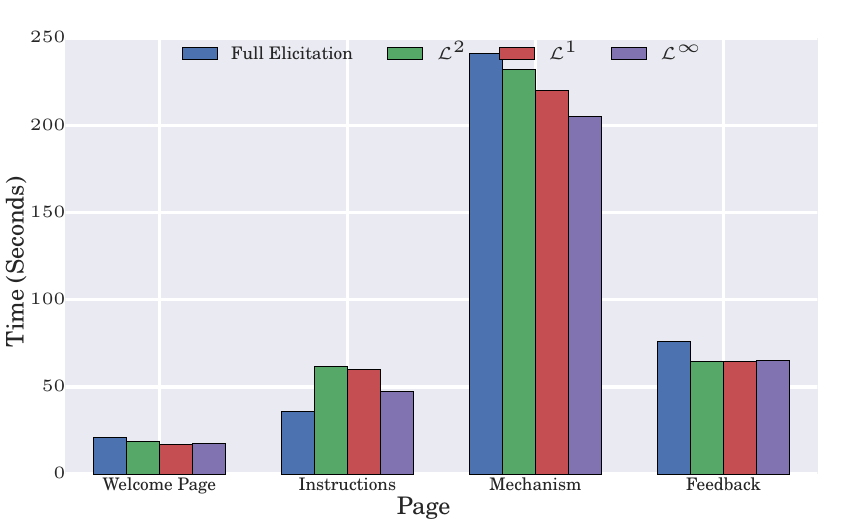}
	\caption{Median time per page}
	\label{fig:pagetimes}
\end{figure}

In this section, we note one potential problem with schemes that explicitly elicit voter's optimal solutions -- for instance, to find the component-wise median -- as compared to the constrained elicitation used in {\sc ILV}: it seems to be cognitively difficult for voters. In Figure~\ref{fig:pagetimes}, the median time per page, aggregated across each mechanism type, is shown. The ``Mechanism'' time includes a single user completing both sets in each of the constrained mechanism types, but not does include the time to also do the extra full elicitation task in cases where a voter was asked to do both a constrained mechanism and the full elicitation. The full elicitation bars include only voters who did only the full elicitation mechanism, and so the bars are completely independent. On average, it took longer to do the full elicitation mechanism than it took to do \textit{two} sets of any of the constrained mechanisms, suggesting some level of cognitive difficulty in articulating one's ideal points and weights on each dimension -- even though understanding what the instructions are asking was simple, as demonstrated by the shorter instruction reading time for the full elicitation mechanism. The $\mathcal{L}^\infty$ mechanism took the least time to both understand and do, while the $\mathcal{L}^2$ mechanism took the longest to do, among the constrained mechanisms. 

This result is intuitive: it is easier to move each budget item independently when the maximum movement is bounded than it is to move the items when the sum or the sum of the changes squared is bounded (even when these values are calculated for the voter). In practice, with potentially tens of items on which constituents are voting, these relative time differences would grow even larger, potentially rendering full elicitation or $\mathcal{L}^2$ constraints unpalatable to voters.

One potential caveat to this finding is that the Full Elicitation mechanism potentially provides more information than do the other mechanisms. From a polling perspective, it is true that more information is provided from full elicitation -- one can see the distribution of votes, the disagreement, and correlation across issues, among other things. However, from a voting perspective, in which the aggregation (winner) is the only thing reported, it is not clear that this extra information is useful. Further, much of this information that full elicitation provides can reasonably be extracted from movements of voters, especially the movements of those who are given a starting point close to the eventual equilibrium.
\FloatBarrier

\subsubsection{UI biases}
\label{sec:againstfullelicit_biases}
We now turn our attention to the question of how workers behaved under the full elicitation mechanism and highlight some potential problems that may affect results in real deployments. Figures~\ref{fig:histofvalues}~and~\ref{fig:histofweights} show the histogram of values and weights, respectively, elicited from all workers who did the full elicitation mechanism. Note that in the histogram of values, in every dimension, the largest peak is at the slider's default value (at the 2016 estimated budget), and the histograms seem to undergo a phase shift at that peak, suggesting that voters are strongly anchored at the slider's starting value. This anchoring could systematically bias the medians of the elicited values. 

A similar effect occurs in eliciting voter weights on each dimension. Observe that in Figure~\ref{fig:experiment2percentmovement_allmechanisms} the full elicitation weights appear far more balanced than the weights implied by any of the mechanisms (for the full elicitation mechanism, the plot shows the average weight over the sum of the weights for each voter). From the histogram of full elicitation weights, however, we see that this result is a consequence of voters rarely moving a dimension's weight down from the default of 5, but rather moving others up.

One potential cause of this behavior is that voters might think that putting high weights on each dimension would mean their opinions would count more, whereas in any aggregation one would either ignore the weights (calculate the unweighted median) or normalize the weights before aggregating. In future work, one potential fix could be to add a ``normalize'' button for the weights, which would re-normalize the weights, or to automatically normalize the weights as voters move the sliders.

These patterns demonstrate the difficulty in eliciting utilities from voters directly; even asking voters how much they care about a particular budget item is extremely susceptible to the user interface design. Though such anchoring to the slider default undoubtedly also occurs in the $\mathcal{L}^\infty$ constrained mechanism, it would only slow the rate of convergence, assuming the anchoring affects different voters similarly. These biases can potentially be overcome by changing the UI design, such as by providing no default value through sliders. Such design choices must be carefully thought through before deploying real systems, as they can have serious consequences.

\begin{figure}[htb!]
	\centering
	\includegraphics[width=4.5in]{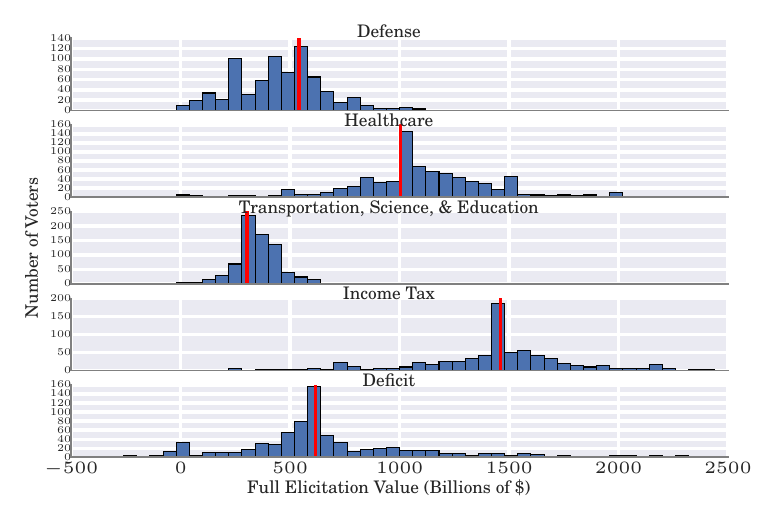}
	\caption{Histogram of values from all full elicitation data. The red vertical lines indicate each slider's default value (at the 2016 estimated budget).}
	\label{fig:histofvalues}
\end{figure}
\begin{figure}[htb!]
	\centering
	\includegraphics[width=4.5in]{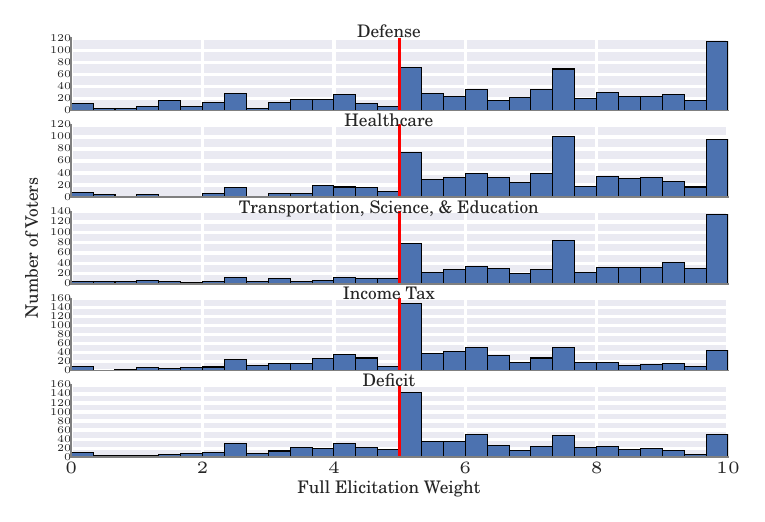}
	\caption{Histogram of weights from all full elicitation data. The red vertical lines indicate the sliders' default value of 5.}
	\label{fig:histofweights}
\end{figure}

\section{Conclusion}\label{sec:conclusion}
We evaluate a natural class of iterative algorithms for collective decision-making in continuous spaces that makes practically reasonable assumptions on the nature of human feedback. We first introduce several cases in which the algorithm converges to the societal optimum point, and others in which the algorithm converges to other interesting solutions. We then experimentally test such algorithms in the first work to deploy such a scheme. Our findings are significant: even with theoretical backing, two variants fail the basic test of being able to give a consistent decision across multiple trials with the same set of voters. On the other hand, a variant that uses $\mathcal{L}^\infty$ neighborhoods consistently leads to convergence to the same solution, which has attractive properties under a likely model for voter preferences suggested by this convergence. We also make certain observations about other properties of user preferences -- most saliently, that they have large indifferences on dimensions about which they care less.

In general, this work takes a significant step within the broad research agenda of understanding the fundamental limitations on the quality of societal outcomes posed by the constraints of human feedback, and in designing innovative mechanisms that leverage this feedback optimally to obtain the best achievable outcomes.

\appendix
\section{Mechanical Turk Experiment Additional Information}

In this section, we provide additional information regarding our Amazon Mechanical Turk experiment, including a walk-through of the user experience. Furthermore, we have a live demo that can be accessed at: \url{http://gargnikhil.com/projectdetails/IterativeLocalVoting/}. This demo will remain online for the foreseeable future.

\noindent Figures~\ref{fig:page1} through~\ref{fig:page3full} show screenshots of the experiment. We now walkthrough the experiment:
\begin{itemize}
	\item Figure~\ref{fig:page1} -- Welcome page. Arriving from Amazon Mechanical Turk, the workers read an introduction and the consent agreement.
	\item Figure~\ref{fig:page2l2} -- Instructions (shown are $\mathcal{L}^2$ instructions). The workers read the instructions, which are also provided on the mechanism page. There is a 5 minute limit for this page.
	\item Figures~\ref{fig:page3l2},~\ref{fig:page3full} -- Mechanism page for $\mathcal{L}^2$ and Full Elicitation, respectively. For the former, workers are asked to move to their favorite point within a constraint set, for 2 different budget points. The ``Current Credit Allocation'' encodes the constraint set -- as workers move the budget bars, it shows how much of their movement budget they have spent, and on which items. The other constrained movement mechanisms are similar. For the Full Elicitation mechanism, voters are simply asked to indicate their favorite budget point and weights. The instructions are repeated on the mechanism page as well at the top. There is a 10 minute limit for this page.
	\item Second mechanism, 30$\%$ of workers. Some workers were asked to do both one of the $\mathcal{L}^1$,$\mathcal{L}^2$, or $\mathcal{L}^\infty$, and the Full Elicitation mechanism. For these workers, the Full Elicitation mechanism shows up after the constrained mechanism.
	\item Figure~\ref{fig:page4} -- Feedback page. Finally, workers are asked to provide feedback, after which they are shown a code and return to the Mechanical Turk website. 
\end{itemize}

\begin{figure}
	\centering
	\begin{subfloat}{}
		\includegraphics[width=1\textwidth]{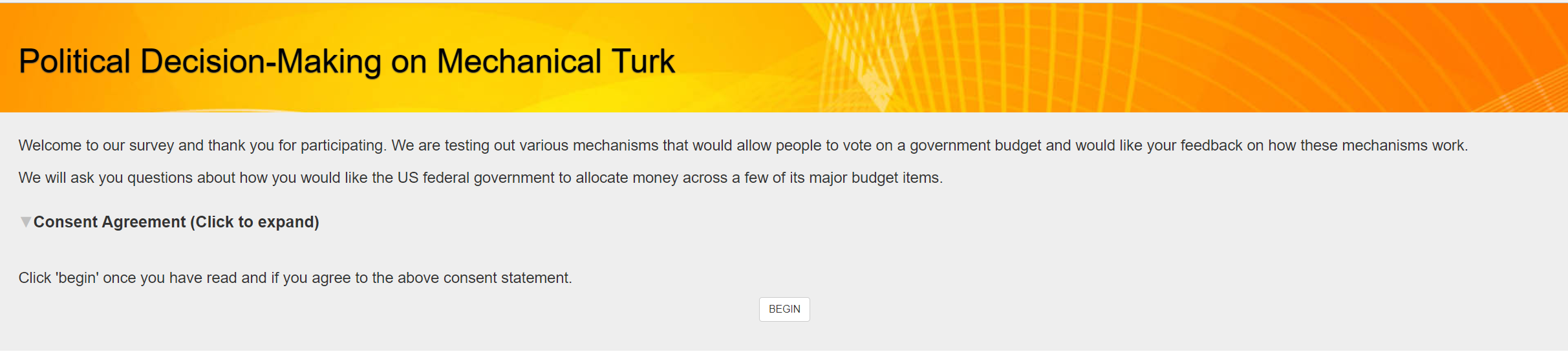}
		\caption{Page 1 -- Welcome Page for all mechanisms}
		\label{fig:page1}
	\end{subfloat}
\end{figure}
\begin{figure}
	\begin{subfloat}{}
		\includegraphics[width=1\textwidth]{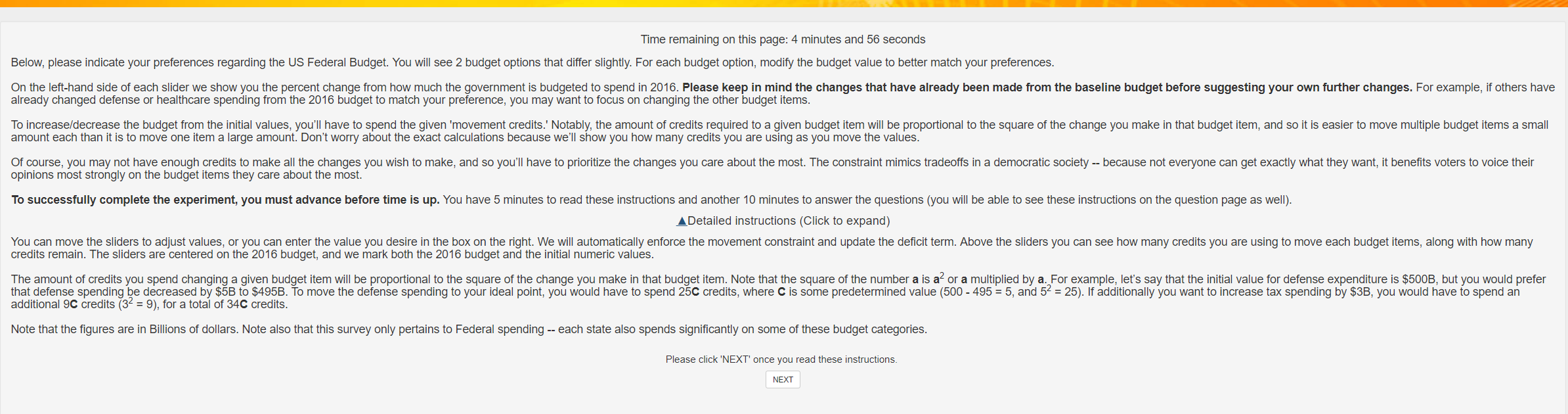}
		\caption{$\mathcal{L}^2$ Page 2 -- Instructions}
		\label{fig:page2l2}
	\end{subfloat}
\end{figure}
\begin{figure}
	\begin{subfloat}{}
		\includegraphics[width=1\textwidth]{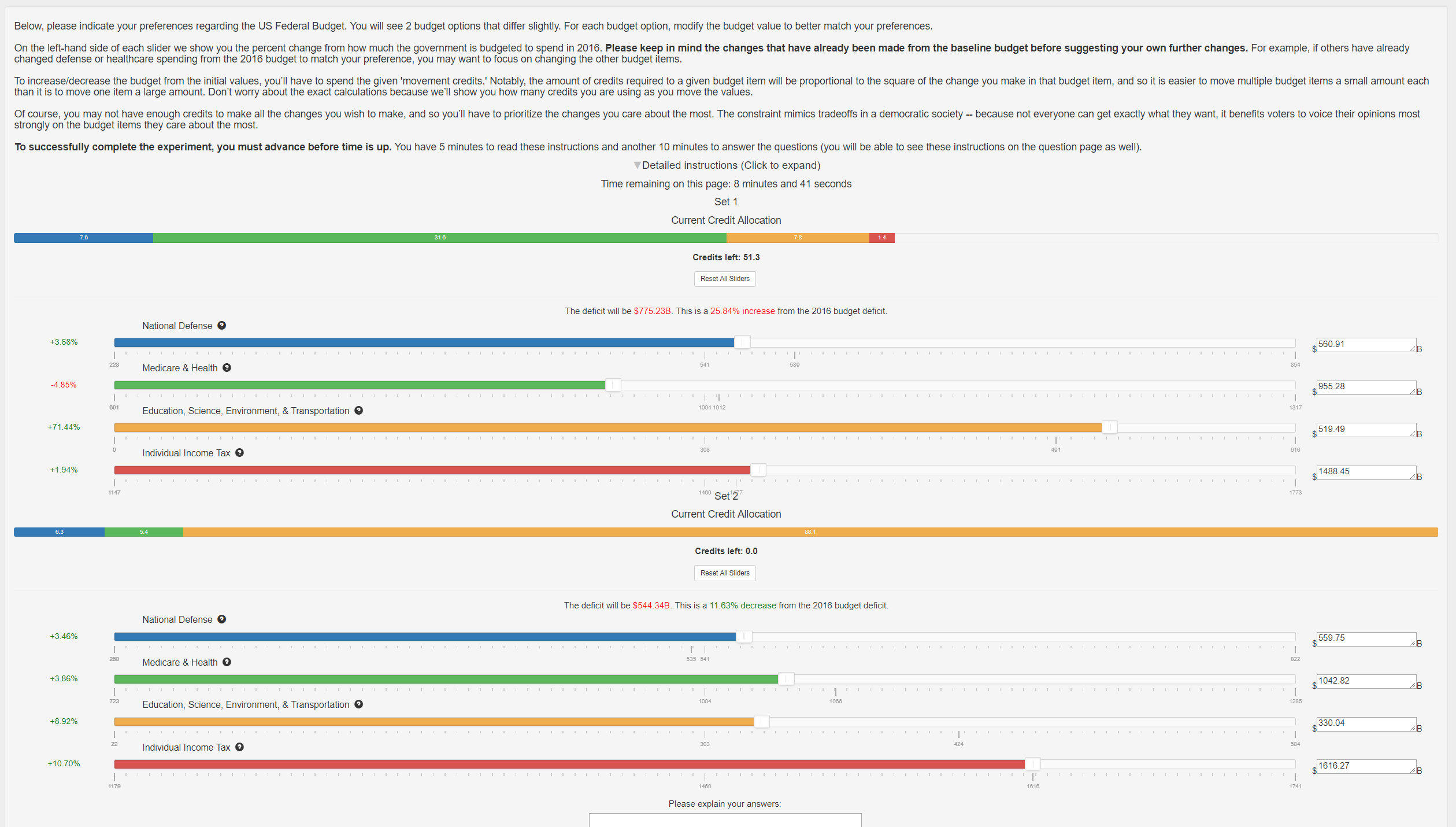}
		\caption{$\mathcal{L}^2$ Page 3 -- Mechanism}
		\label{fig:page3l2}
	\end{subfloat}
\end{figure}
\begin{figure}
	\begin{subfloat}{}
		\includegraphics[width=1\textwidth]{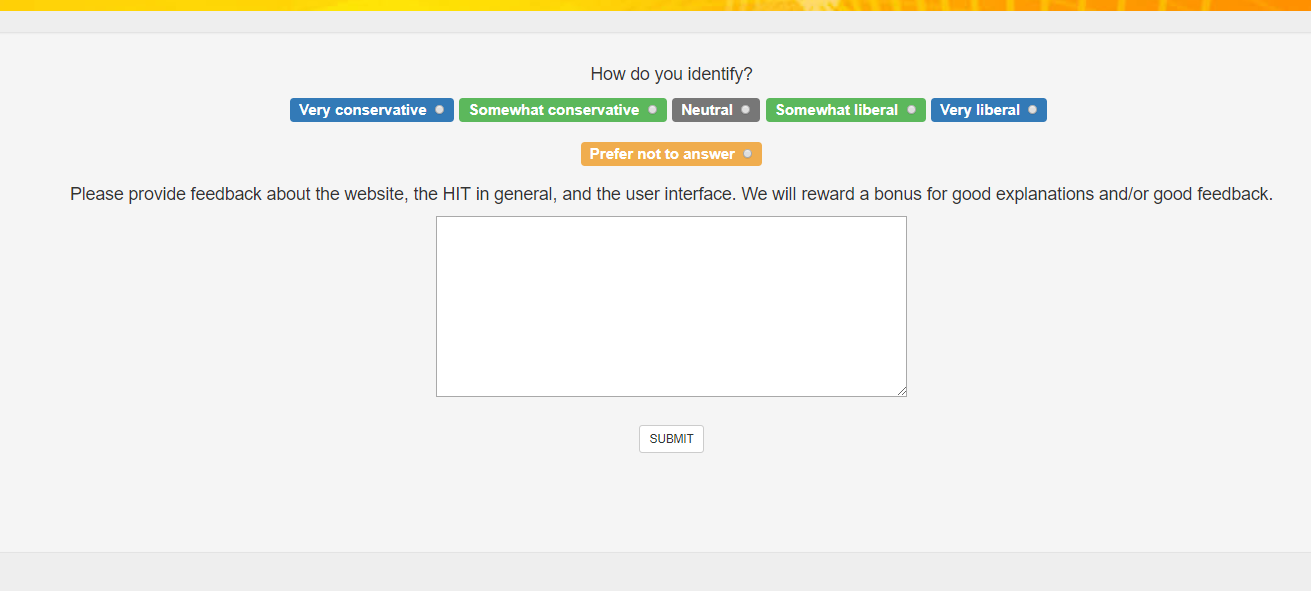}
		\caption{Page 4 -- Feedback for all mechanisms}
		\label{fig:page4}
	\end{subfloat}
\end{figure}
\begin{figure}
	\begin{subfloat}{}
		\includegraphics[width=1\textwidth]{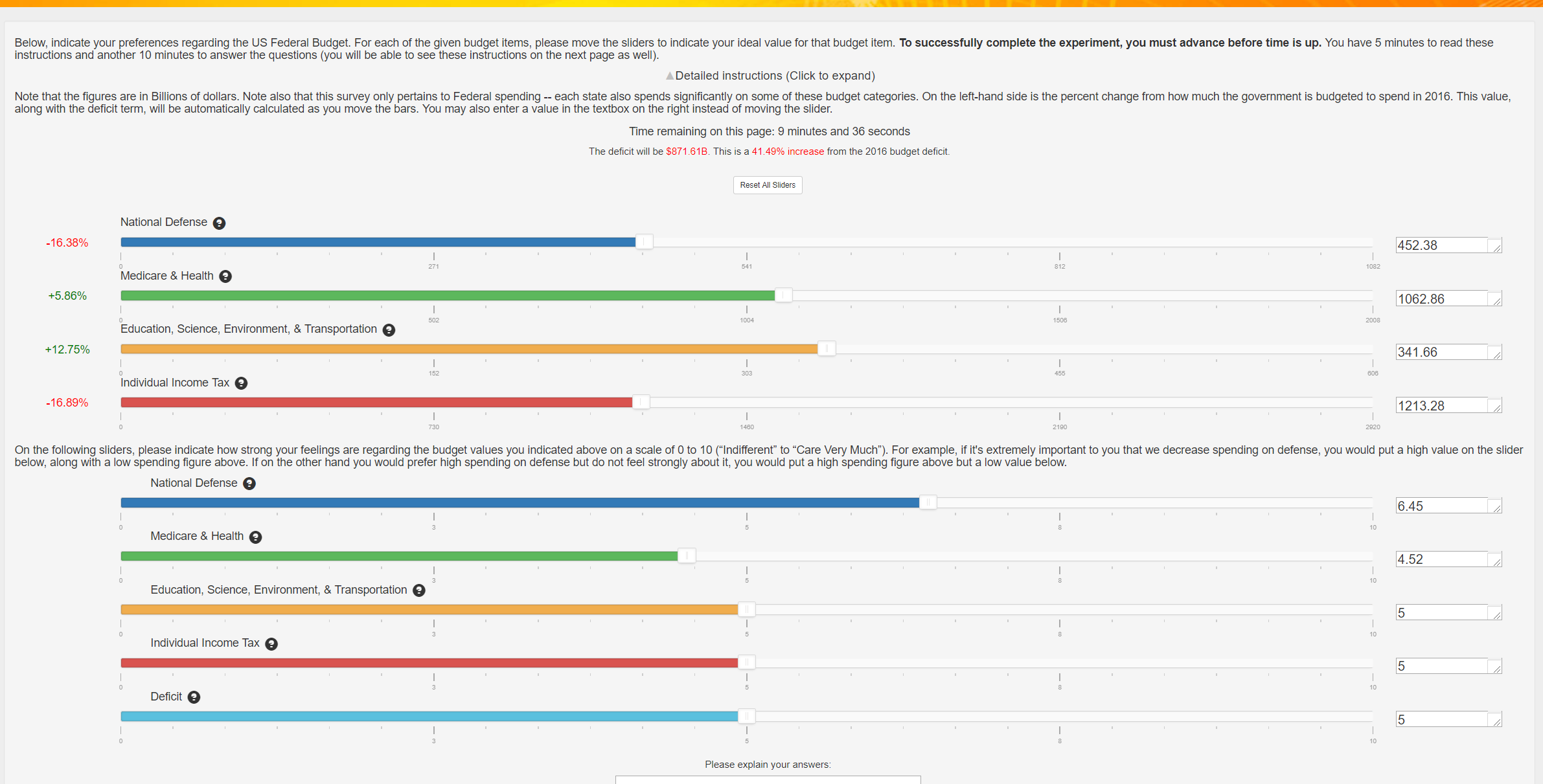}
		\caption{Full Elicitation Page 3 -- Mechanism}
		\label{fig:page3full}
	\end{subfloat}
\end{figure}

\section{Indifference Regions Additional Information}
\label{sec:indifferenceregion}
We now present some additional data for the claim in Section~\ref{sec:largeindifference}, that voters have large indifference regions on the space. In particular, Figures~\ref{fig:histogramofcreditslinf_cared} and~\ref{fig:histogramofcreditslinf_caredNOT} reproduce Figure~\ref{fig:histogramofcreditslinf} but with workers who provided explanations longer (and shorter) than the median response, respectively. This split can (roughly) correspond to workers who may have answered more or less sincerely to the budgeting question. We find that the response distribution, as measured by the fraction of possible movement one used when far away from one's ideal point on a given dimension, are similar. 

\begin{figure}[htb!]
	\centering
	\includegraphics[width=4.5in]{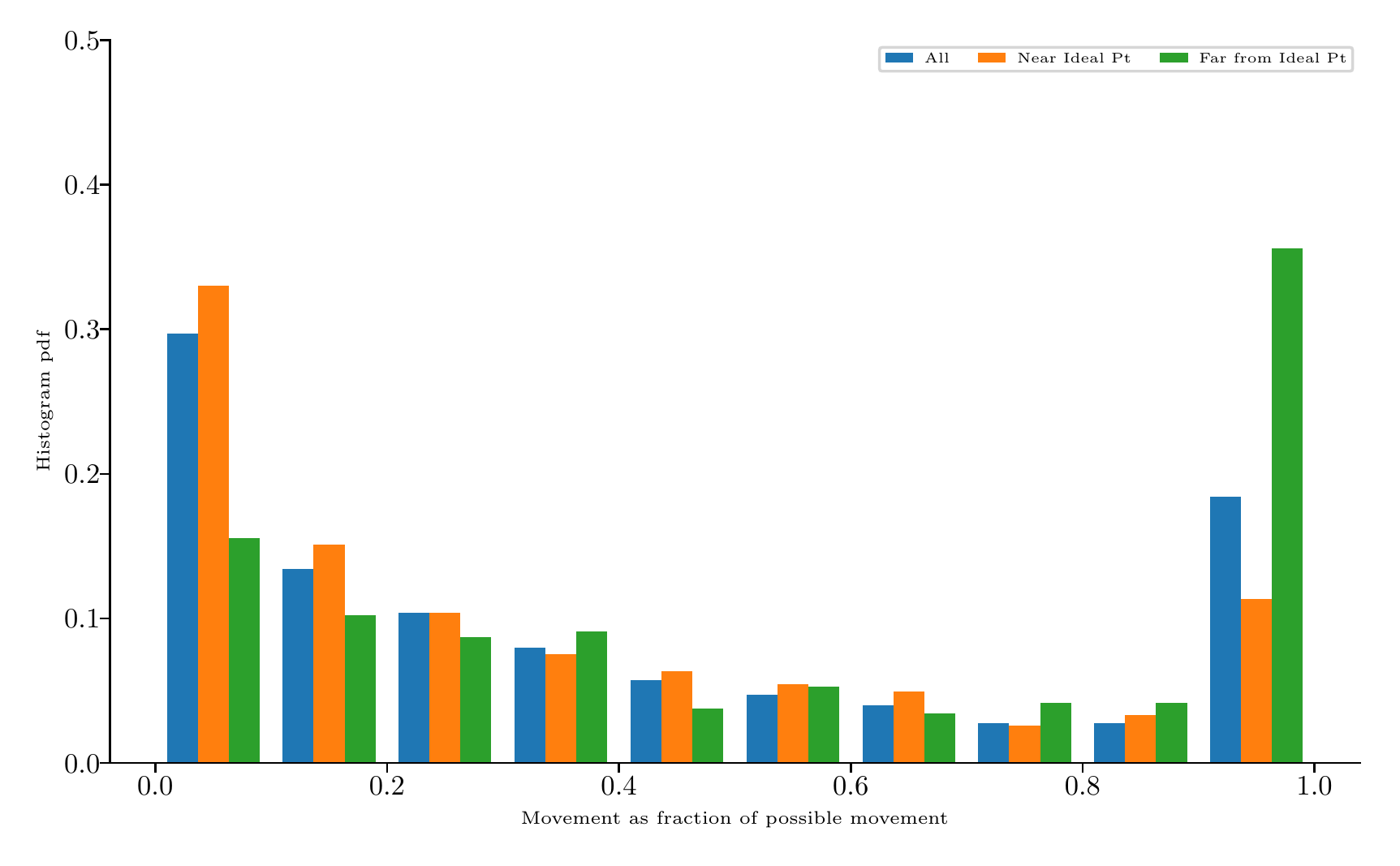}
	\caption{Fraction of possible movement in each dimension in $\mathcal{L}^\infty$, conditioned on distance to ideal pt. The `All' condition contains data from all three $\mathcal{L}^\infty$ instances, whereas the others only from the instance that also did full elicitation. This plot only includes those people who provided an explanation \textbf{as long or longer than the median} explanation provided (197 characters).}
	\label{fig:histogramofcreditslinf_cared}
\end{figure}
\begin{figure}[htb!]
	\centering
	\includegraphics[width=4.5in]{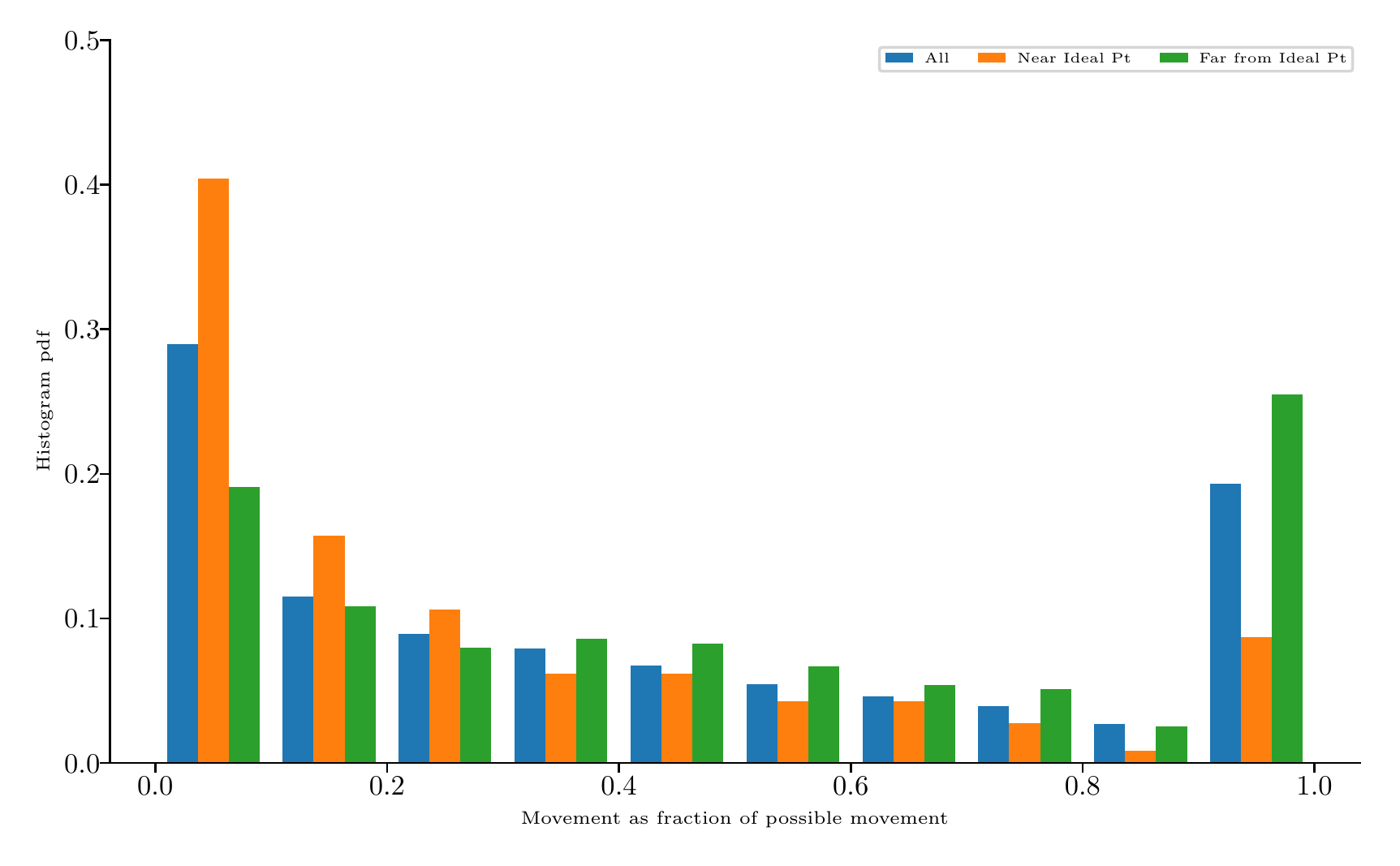}
	\caption{Fraction of possible movement in each dimension in $\mathcal{L}^\infty$, conditioned on distance to ideal pt. The `All' condition contains data from all three $\mathcal{L}^\infty$ instances, whereas the others only from the instance that also did full elicitation. This plot only includes those people who provided an explanation \textbf{shorter than the median} explanation provided (197 characters).}
	\label{fig:histogramofcreditslinf_caredNOT}
\end{figure}
\FloatBarrier
\section{Proofs}
In this appendix, we include proofs for all the theorems in the paper. 

\subsection{Known SSGM Results}
\begin{theorem} \cite{nemirovski_robust_2009,strassen_existence_1965}
	Let $\theta \in \Theta$ be a random vector with distribution P. Let $\bar{f}(x) = \textup{E}[f(x, \theta)] = \int_{\Theta}f(x, \theta) dP(\theta)$, for $x\in \mathcal{X}$, a non-empty bounded closed convex set, and assume the expectation is well-defined and finite valued. Suppose that $f(\cdot,\theta), \theta \in \Theta$ is convex and $\bar{f}(\cdot)$ is continuous and finite valued in a neighborhood of point $x$. 
	For each $\theta$, choose any $ g(x, \theta) \in \partial f(x, \theta)$. Then, there exists $\bar{g}(x) \in \partial \bar{f}(x)$ s.t. $\bar{g}(x) = \textup{E}_\theta[g(x, \theta)]$. 
	\label{thm:leib} 
\end{theorem}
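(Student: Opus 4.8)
The plan is to obtain the conclusion directly by integrating, in $\theta$, the defining subgradient inequality for the convex functions $f(\cdot,\theta)$. Fix the point $x$. For every $\theta$ and every $y$, convexity gives $f(y,\theta) - f(x,\theta) \ge g(x,\theta)^T(y-x)$. Once we know that $\theta\mapsto g(x,\theta)$ is $P$-integrable, we may set $\bar g(x) \triangleq \textup{E}_\theta[g(x,\theta)]$ and take expectations of this inequality term by term, getting $\bar f(y) - \bar f(x) \ge \bar g(x)^T(y-x)$ for all $y$ (with both sides $+\infty$-valued, hence the inequality trivial, whenever $\bar f(y) = +\infty$). By the definition of the subdifferential used in the paper this is precisely $\bar g(x)\in\partial\bar f(x)$. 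So the real content is two points: (i) well-definedness and integrability of the selection $g(x,\cdot)$; and (ii) legitimacy of passing the expectation through the inequality, which reduces to monotonicity of the integral once every term lies in $L^1(P)$.

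Step (i) is where the work is, and the hypothesis that $\bar f$ is finite on a whole neighborhood of $x$ is exactly what makes the selected subgradients uniformly $L^1$-dominated. I would argue as follows. Choose $\delta>0$ so that $\bar f$ is finite on $B(x,\delta)\subseteq\mathcal X$, and consider the $2M$ points $v_{j}^{\pm}=x\pm\delta e_{j}$; their convex hull is the cross-polytope $\{y:\|y-x\|_1\le\delta\}$, which contains the Euclidean ball $B(x,\delta/\sqrt M)$. Each $\bar f(v_j^{\pm})=\textup{E}[f(v_j^{\pm},\theta)]$ is finite, so $f(v_j^{\pm},\cdot)\in L^1(P)$; likewise $f(x,\cdot)\in L^1(P)$. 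Hence for $P$-a.e.\ $\theta$ all the $f(v_j^{\pm},\theta)$ are finite, so by convexity $f(\cdot,\theta)\le\max_{j,\pm}f(v_j^{\pm},\theta)$ on that cross-polytope; in particular $\partial f(x,\theta)\neq\emptyset$ and every $g(x,\theta)$ in it satisfies an elementary bound of the form $\|g(x,\theta)\|\le\frac{\sqrt M}{\delta}\big(\max_{j,\pm}f(v_j^{\pm},\theta)-f(x,\theta)\big)$, obtained by plugging points of $B(x,\delta/\sqrt M)$ into the subgradient inequality and taking a supremum over directions. The right-hand side is a finite sum of $L^1$ functions, so $g(x,\cdot)$ is dominated by an integrable function and $\bar g(x)=\textup{E}_\theta[g(x,\theta)]$ is well defined and finite. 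Measurability of $\theta\mapsto g(x,\theta)$ I would take as part of what it means to ``choose'' the selection (it is the stochastic subgradient oracle), or supply it from a measurable-selection theorem applied to the measurable set-valued map $\theta\mapsto\partial f(x,\theta)$; this standard ingredient is the role of the Strassen-type result cited alongside \cite{nemirovski_robust_2009}.

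Step (ii) then finishes the proof: for each fixed $y\in\mathcal X$ the function $\theta\mapsto f(y,\theta)-f(x,\theta)-g(x,\theta)^T(y-x)$ is nonnegative and integrable (a combination of $L^1$ functions), so taking expectations preserves the inequality and yields $\bar f(y)\ge\bar f(x)+\bar g(x)^T(y-x)$; for $y\notin\mathcal X$ it holds vacuously under the convention $\bar f\equiv+\infty$ off $\mathcal X$. I expect the main obstacle to be precisely the uniform-in-$\theta$ domination in step (i); everything else is bookkeeping once that $L^1$ envelope, anchored at the finitely many points where $\bar f$ is known to be finite, is in hand.
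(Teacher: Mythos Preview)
The paper does not prove this theorem; it is stated under ``Known SSGM Results'' and attributed to \cite{nemirovski_robust_2009,strassen_existence_1965} without any argument supplied. So there is no in-paper proof to compare against.

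Your sketch is the standard route and is essentially correct. The two substantive ingredients you identify are exactly the right ones: an $L^1$ envelope for the selection $g(x,\cdot)$, obtained from finiteness of $\bar f$ at finitely many points surrounding $x$; and then monotonicity of the integral applied to the pointwise subgradient inequality. The cross-polytope trick with the $2M$ points $x\pm\delta e_j$ is a clean way to get the domination, and your bound $\|g(x,\theta)\|\le (\sqrt M/\delta)\bigl(\max_{j,\pm} f(v_j^{\pm},\theta)-f(x,\theta)\bigr)$ follows from the subgradient inequality in the usual way. One small wrinkle: the theorem as stated asserts the conclusion for \emph{any} selection $g(x,\theta)\in\partial f(x,\theta)$, so measurability of that selection is not something you get to assume from an oracle model; you really do need a measurable-selection argument (this is indeed the role of the Strassen citation, which gives the required disintegration/selection for convex-valued multifunctions). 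Apart from that, the argument is complete.
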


This theorem says that the expected value of the sub-gradient of the utility at any point $x$ across voters is a subgradient of the societal utility at $x$, irrespective of how the voters choose the subgradient when there are multiple subgradients, i.e., when the utility function is not differentiable.   
This key result allows us to use the subgradient of utility function of a sampled voter as an unbaised estimate of the societal subgradient.

Now, consider a convex function $f$ on a non-empty bounded closed convex set $\mathcal{X} \subset \bbR^M$, and use $[\cdot]_\mathcal{X}$ to designate the projection operator. Starting with some $x_0\in \mathcal{X}$, consider the SSGM update rule $x_{t} = [x_{t-1} - r_t (\bar{g}_t + z_t + b_t)]_\mathcal{X}$, where $z_t$ is a zero-mean random variable and $b_t$ is a constant, and $\bar{g}_t \in \partial f(x_t)$. Let $\textup{E}_t[\cdot]$ be the conditional expectation given $\mathcal{F}_t$, the $\sigma$-field generated by $x_0,x_1,\dots,x_t$. Then we have the following convergence result.
\begin{theorem} \cite{jiang_scheduling_2010} Consider the above update rule.  If
	\begin{align*}
	&f(\cdot)\text{ has a unique minimizer }x^* \in \mathcal{X}\\
	&r_t > 0, \sum _t r_t = \infty, \sum _t r_t^2 < \infty\\
	&\exists C_1 \in \bbR < \infty \text{ s.t. }\|\partial f(x)\|_2 \leq C_1,\forall\, x \in \mathcal{X} \\
	&\exists C_2 \in \bbR < \infty \text{ s.t. } \textup{E}_t[\|z_t\|^2] \leq C_2, \forall\, t \\
	&\exists C_3 \in \bbR < \infty \text{ s.t. }\|b_t\|_2 \leq C_3,\forall\, t \\
	&\sum_{t} r_t \|b_t\| < \infty\text{ w.p. }1
	\end{align*}
	Then $x_t \to x^*$ w.p. 1 as $t \to \infty$.  
	\label{thm:ssgd}
\end{theorem}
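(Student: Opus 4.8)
The plan is to run the classical Robbins--Siegmund supermartingale argument on the squared-distance Lyapunov function $V_t \triangleq \|x_t - x^*\|_2^2$. First I would record the structural facts that drive everything. Since $x^* \in \mathcal{X}$ and $\mathcal{X}$ is closed and convex, the projection $[\cdot]_\mathcal{X}$ is nonexpansive toward $x^*$, so $\|[y]_\mathcal{X} - x^*\|_2 \le \|y - x^*\|_2$ for every $y$; and by convexity together with the optimality of $x^*$, any $\bar g_t \in \partial f(x_{t-1})$ satisfies $\bar g_t^T (x_{t-1} - x^*) \ge f(x_{t-1}) - f(x^*) \ge 0$. (I read the subgradient appearing in the update as evaluated at the current iterate $x_{t-1}$.) Boundedness of $\mathcal{X}$ also gives a finite diameter $D$ with $\|x_{t-1} - x^*\|_2 \le D$ for all $t$.

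Then I would expand one step. Discarding the projection via nonexpansiveness and squaring the update gives
\[
V_t \le V_{t-1} - 2 r_t (\bar g_t + z_t + b_t)^T (x_{t-1} - x^*) + r_t^2 \|\bar g_t + z_t + b_t\|_2^2 .
\]
Taking the conditional expectation given the past through $x_{t-1}$, the $z_t$ first-order cross-term vanishes because $z_t$ is conditionally zero-mean and $x_{t-1} - x^*$ is measurable with respect to the past; the $\bar g_t$ cross-term is bounded above by $-2 r_t (f(x_{t-1}) - f(x^*))$ using the subgradient inequality; the $b_t$ cross-term is bounded in magnitude by $2 D r_t \|b_t\|_2$ via the diameter bound; and the quadratic term is at most $3 r_t^2 (C_1^2 + C_2 + C_3^2)$ after applying $\|a+b+c\|_2^2 \le 3(\|a\|_2^2 + \|b\|_2^2 + \|c\|_2^2)$ and the three moment bounds. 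Collecting terms yields the recursion
\[
\textup{E}[V_t \mid \mathcal{F}_{t-1}] \le V_{t-1} - 2 r_t \bigl(f(x_{t-1}) - f(x^*)\bigr) + \gamma_t, \qquad \gamma_t \triangleq 2 D r_t \|b_t\|_2 + 3 r_t^2 (C_1^2 + C_2 + C_3^2),
\]
and the hypotheses $\sum_t r_t \|b_t\| < \infty$ (w.p.\ $1$) and $\sum_t r_t^2 < \infty$ make $\sum_t \gamma_t < \infty$ almost surely.

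Next I would invoke the Robbins--Siegmund supermartingale convergence theorem with nonnegative ``gain'' $\beta_t \triangleq 2 r_t (f(x_{t-1}) - f(x^*)) \ge 0$ and summable perturbation $\gamma_t$ (and no multiplicative term). Its two conclusions are that $V_t$ converges almost surely to a finite limit $V_\infty \ge 0$ and that $\sum_t r_t (f(x_{t-1}) - f(x^*)) < \infty$ almost surely. Combining the latter with $\sum_t r_t = \infty$ and $f(x_{t-1}) - f(x^*) \ge 0$ forces $\liminf_t \bigl(f(x_{t-1}) - f(x^*)\bigr) = 0$.

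Finally I would pin down the limit. Along a subsequence on which $f(x_{t_k}) \to f(x^*)$, compactness of $\mathcal{X}$ lets me pass to a further subsequence with $x_{t_{k_j}} \to \tilde x \in \mathcal{X}$; continuity of the finite-valued convex $f$ gives $f(\tilde x) = f(x^*)$, and uniqueness of the minimizer forces $\tilde x = x^*$, so $V_{t_{k_j}} \to 0$. Since $V_t$ already converges to $V_\infty$, necessarily $V_\infty = 0$, whence $V_t \to 0$ and $x_t \to x^*$ with probability $1$. I expect the main obstacle to be the careful bookkeeping of the three noise and bias contributions inside the conditional expectation---in particular routing the biased term $b_t$ through the diameter bound so that the relatively weak summability hypothesis $\sum_t r_t \|b_t\| < \infty$ suffices, rather than a stronger martingale-type condition---and then closing the gap from ``$\liminf$ of the suboptimality is zero while $V_t$ converges'' to genuine convergence, which is exactly where the uniqueness of $x^*$ is indispensable.
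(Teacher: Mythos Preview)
The paper does not give its own proof of this statement: Theorem~\ref{thm:ssgd} is quoted from \cite{jiang_scheduling_2010}, with only the remark that the argument there (stated for gradients) carries over to subgradients because the sole inequality used is $(x^*-x_t)^T g_t \le f(x^*)-f(x_t)$, which holds for any subgradient. So there is no ``paper proof'' to compare against beyond that one-line observation.

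Your argument is the standard and correct route for this class of results: nonexpansiveness of the projection onto $\mathcal{X}$ toward $x^*$, a one-step expansion of $\|x_t-x^*\|_2^2$, the subgradient inequality to extract the $-2r_t(f(x_{t-1})-f(x^*))$ term, absorbing the bias via the diameter bound so that only $\sum_t r_t\|b_t\|<\infty$ is needed, and then Robbins--Siegmund to conclude that $V_t$ converges and $\sum_t r_t(f(x_{t-1})-f(x^*))<\infty$. The closing step---$\liminf$ of suboptimality is zero, compactness plus continuity plus uniqueness of $x^*$ forces the full limit $V_\infty=0$---is exactly right. This is precisely the argument underlying the cited reference, so your write-up both matches the intended approach and supplies the details the paper omits.
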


Note: \citet{jiang_scheduling_2010} prove the result for gradients, though the same proof follows for subgradients. Only the inequality $[x^* - x_t]^T g_t \leq f(x^*) - f(x_t)$ for gradient $g_t$ at iteration $t$ is used, which holds for subgradients. \citet{boyd_subgradient_2006} provide a general discussion of subgradient methods, along with similar results. \citet{shor_nondifferentiable_1998}, in Theorem 46, provide a convergence proof for the stochastic subgradient method without projections and the extra noise terms.

\subsection{Mapping {\sc ILV} to SSGM}

As described in Section~\ref{sec:conv}, suppose that $h_\mathcal{X}$ is the induced probability distribution on the ideal values of the voters. In the following discussion, we will refer to voters and their ideal solutions interchangeably.

Next, we restate {\sc ILV} without the stopping condition so that it looks like the stochastic subgradient method. Consider Algorithm~\ref{alg:votermovetomin}.
\begin{algorithm}
	\SetAlgorithmName{Algorithm}{}[
	Start at some $x_0\in \mathcal{X}$. For $t\geq 1$,
	\begin{itemize}
		\item Sample voter $v_t \in \mathcal{V}$ from $h_\mathcal{V}$.
		\item  Compute $x_{t} = [x_{t-1} - r_t\tilde{g}_{v_t}(x_t)]_\mathcal{X}$, where $r_t=\frac{r_0}{t}$ and $r_t\tilde{g}_{v_t}(x_t)$ is movement given by $v_t$.
	\end{itemize}
	\caption{{\sc ILV}}
	\label{alg:votermovetomin}
\end{algorithm}

We want to minimize the societal cost, $\bar{f}(x) = \textup{E}[f_v(x)]$. From Theorem~\ref{thm:leib}, it immediately follows that if each voter $v$ articulates a subgradient of her utility function for all $x$, i.e. $\tilde{g}_{v}(x) \in \partial f_v(x)$, then from Theorem~\ref{thm:ssgd}, we can conclude that the algorithm converges. However, users may not be able to articulate such a subgradient. Instead, when the voters respond correctly to query~\eqref{query} (i.e. move to their favorite point in the given $\mathcal{L}^q$ neighborhood), we have  
\begin{align}\label{def:gtilde}
\tilde{g}_{v_t}(x_t) &= \frac{x_t - \arg\min_x[f_{v_t}(x) : \|x - x_t\|_q \leq r_t]}{r_t}. 
\end{align} 

\noindent Furthermore, for all the proofs, we assume the following.
\begin{enumerate}[leftmargin=*]
	\item The solution space $\mathcal{X}\subset \mathbb{R}^M$ is non-empty, bounded, closed, and convex.
	\item Each voter $v$ has a unique ideal solution $x_v\in \mathcal{X}$.
	\item The ideal point $x_v$ of each voter is drawn independently from a probability distribution with a bounded and measurable density function $h_\mathcal{X}$ on $M$ dimensions: $\textrm{there exists } C \text{ s.t. }\forall\, x$ we have $h_{\mathcal{X}} (x) \leq C$. This assumption allows us to bound the probability of errors that occur in small regions of the space.
\end{enumerate}
\subsection{Proof of Theorem~\ref{thm:pqmain}}

Let the disutility, or cost to voter $v \in \mathcal{V}$ be $f_v(x) = \|x - x_v\|_p$ for all $x\in \mathcal{X}$.   
\noindent We use the following technical lemma:
\begin{lemma}
	For $q \in \{1,2, \infty\}$, there exists $K_2 \in \bbR^+$ s.t. $\|\tilde{g}_{v}(x) -  g_t\|_2 \leq K_2$, $\forall\,\, g_t \in \partial f_{v}(x)$ for any  $v$ and  $x$.
	\label{lem:boundedgtildeerror}
\end{lemma}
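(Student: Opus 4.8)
The plan is to bound the two quantities $\tilde{g}_{v}(x)$ and $g_t$ separately in the $\mathcal{L}^q$ norm by the universal constant $1$, and then pass to the $\mathcal{L}^2$ norm using the equivalence of norms on $\mathbb{R}^M$. Here $f_v = \|\cdot - x_v\|_p$ and, for the pairs under consideration, $q$ is the dual exponent of $p$ (i.e. $1/p + 1/q = 1$).

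First I would observe that the reported movement is always short in the $\mathcal{L}^q$ norm. By the definition~\eqref{def:gtilde}, $r_t\,\tilde{g}_{v}(x) = x - x'$ where $x'$ is a minimizer of $f_v$ over the feasible ball $\{y : \|y - x\|_q \le r_t\}$; since $x'$ lies in that ball we get $\|x - x'\|_q \le r_t$, hence $\|\tilde{g}_{v}(x)\|_q \le 1$ regardless of $v$, $x$, or $r_t$ (this includes the degenerate case $x' = x_v$ when the ideal point already lies inside the ball). Next I would invoke the standard fact that the subdifferential of a norm is contained in the closed unit ball of the dual norm: since $f_v = \|\cdot - x_v\|_p$, every $g_t \in \partial f_v(x)$ satisfies $\|g_t\|_q \le 1$ (at $x \ne x_v$ one in fact has equality, while at the kink $x = x_v$ the subdifferential is the entire dual unit ball; either way the bound holds).

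Finally, on $\mathbb{R}^M$ all norms are equivalent: there is a constant $c_q$ depending only on $M$ and $q$ — e.g. $c_2 = c_1 = 1$ and $c_\infty = \sqrt{M}$ — with $\|u\|_2 \le c_q \|u\|_q$ for all $u$. Combining the two estimates yields $\|\tilde{g}_{v}(x)\|_2 \le c_q$ and $\|g_t\|_2 \le c_q$, and the triangle inequality gives $\|\tilde{g}_{v}(x) - g_t\|_2 \le 2 c_q =: K_2$, a constant independent of $v$, $x$, $t$, and of the particular subgradient chosen.

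All three ingredients are elementary, so I do not expect a real obstacle; the only point that needs a little care is \emph{uniformity} — ensuring the bound truly does not depend on $v$, $x$, $r_t$, or the choice of subgradient — and this is automatic since $c_q$ depends only on the ambient dimension and the norm index. This lemma will subsequently be used to control the magnitude of the discrepancy between a step of {\sc ILV} and a genuine stochastic subgradient step (the $b_t$ term in the template of Theorem~\ref{thm:ssgd}) on the `bad' regions where the two updates fail to coincide.
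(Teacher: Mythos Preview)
Your proposal is correct and follows essentially the same route as the paper: triangle inequality, bound $\|\tilde g_v(x)\|_q\le 1$ from the constraint set, bound the subgradient by appealing to boundedness of $\partial\|\cdot\|_p$, and pass to the $\mathcal{L}^2$ norm via the equivalence $\|y\|_2\le\|y\|_1$, $\|y\|_2\le\sqrt{M}\|y\|_\infty$. The only cosmetic difference is that you justify the subgradient bound via the dual-norm characterization $\partial\|\cdot\|_p\subseteq\{g:\|g\|_q\le1\}$ (which tacitly uses that $q$ is dual to $p$ for the three pairs at hand), whereas the paper simply asserts that subgradients of the $p$-norm are bounded; both are equivalent here.
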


\noindent The lemma bounds the error in the movement direction from the gradient direction, by noting that both the movement direction and the gradient direction have bounded norms. 
\\\\\noindent We also need the following lemma, which is proved separately for each case in the following sections. 
\begin{lemma}
	Suppose that $f_v(x) \triangleq \|x_v - x \|_p$ and define the function 
	$$A_t\triangleq \mathbb{I}{\{\tilde{g}_{v_t}(x_t)\notin\partial f_{v_t}(x_t)\}},$$
	where $\tilde{g}_{v_t}(x_t)$ is as defined in~\eqref{def:gtilde}. Then there exists $C\in \bbR$ s.t. $\forall\,\, n$, $\textup{P}(A_t = 1 | \mathcal{F}_t) \leq C r_t$, when $(p=2, \,q=2)$, $(p=1, \,q=\infty)$, or $(p=\infty, \,q=1)$. 
	\label{lem:whengoodcondition}
\end{lemma}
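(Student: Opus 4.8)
I would prove the three cases $(p,q)\in\{(2,2),(1,\infty),(\infty,1)\}$ separately. Conditioning on $\mathcal F_t$ fixes the current point $x_t$, so the only remaining randomness is the freshly sampled voter's ideal point $x_{v_t}\sim h_{\mathcal X}$; moreover, since $h_{\mathcal X}\le C$ and $\mathcal X$ is bounded, all one- and two-coordinate marginal densities of $h_{\mathcal X}$ are bounded as well. It therefore suffices, in each case, to (a) write the voter's response~\eqref{def:gtilde} in closed form, (b) exhibit a ``bad set'' $\mathcal B_t\subseteq\mathcal X$ containing every $x_{v_t}$ with $\tilde g_{v_t}(x_t)\notin\partial f_{v_t}(x_t)$, and (c) show $\int_{\mathcal B_t}h_{\mathcal X}\le C'r_t$ for a constant $C'$ independent of $t$; this gives $\mathrm P(A_t=1\mid\mathcal F_t)\le C'r_t$.

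\emph{Cases $(2,2)$ and $(1,\infty)$.} For $(2,2)$ the response is the $\ell^2$-projection of $x_{v_t}$ onto $\{y:\|y-x_t\|_2\le r_t\}$: when $\|x_t-x_{v_t}\|_2\ge r_t$ a one-line computation gives $\tilde g_{v_t}(x_t)=(x_t-x_{v_t})/\|x_t-x_{v_t}\|_2=\nabla f_{v_t}(x_t)$, and when $\|x_t-x_{v_t}\|_2<r_t$ the response is $x_{v_t}$ itself, so $\|\tilde g_{v_t}(x_t)\|_2<1$ and it is not a subgradient of $\|\cdot-x_{v_t}\|_2$. Thus $\mathcal B_t=\{y:\|y-x_t\|_2<r_t\}$, and $\int_{\mathcal B_t\cap\mathcal X}h_{\mathcal X}\le C\min\{\mathrm{vol}(\text{radius-}r_t\text{ ball}),\mathrm{vol}(\mathcal X)\}=O(r_t)$ since $\mathcal X$ is bounded. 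For $(1,\infty)$ both the constraint $\{y:\|y-x_t\|_\infty\le r_t\}$ and the objective $\sum_m|y^m-x_{v_t}^m|$ split coordinatewise, and in coordinate $m$ the response equals $x_{v_t}^m$ if $|x_t^m-x_{v_t}^m|\le r_t$ and $x_t^m+r_t\sign(x_{v_t}^m-x_t^m)$ otherwise; comparing with $\partial\|\cdot-x_{v_t}\|_1(x_t)$ (the box whose $m$-th factor is $\{\sign(x_t^m-x_{v_t}^m)\}$ if $x_t^m\ne x_{v_t}^m$ and $[-1,1]$ otherwise) shows $\tilde g_{v_t}(x_t)$ is a subgradient unless $0<|x_t^m-x_{v_t}^m|<r_t$ for some $m$. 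Hence $\mathcal B_t\subseteq\bigcup_{m=1}^M\{y:|y^m-x_t^m|\le r_t\}$, and a union bound with the bounded marginal densities gives $\int_{\mathcal B_t}h_{\mathcal X}=O(Mr_t)=O(r_t)$.

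\emph{Case $(\infty,1)$ (the main work).} Write $D_m:=|x_t^m-x_{v_t}^m|$ and let $D_{(1)}\ge D_{(2)}$ be the two largest. The key claim is: if $D_{(1)}-D_{(2)}\ge r_t$, then the minimiser of $\|y-x_{v_t}\|_\infty$ over $\{y:\|y-x_t\|_1\le r_t\}$ is the single point obtained by spending the entire $\ell^1$-budget in the (then unique) maximising coordinate $m^*$, so $\tilde g_{v_t}(x_t)=\sign(x_t^{m^*}-x_{v_t}^{m^*})e_{m^*}$, which is exactly the singleton $\partial\|\cdot-x_{v_t}\|_\infty(x_t)$, whence $A_t=0$. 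Indeed, any feasible $y$ obeys $\|y-x_{v_t}\|_\infty\ge D_{(1)}-\|y-x_t\|_1\ge D_{(1)}-r_t$, and equality forces precisely that move, since any other feasible $y$ leaves the coordinate-$m^*$ deviation strictly above $D_{(1)}-r_t$. (When $D_{(1)}-D_{(2)}<r_t$ the optimal move instead ``levels'' several near-maximal coordinates, so its support is not contained in the set of coordinates maximising $D_m$, which is exactly when it can fail to lie in $\partial f_{v_t}(x_t)$.) Therefore $\{A_t=1\}\subseteq\{D_{(1)}-D_{(2)}<r_t\}\subseteq\bigcup_{m\ne m'}\bigl\{\,\bigl|\,|x_t^m-x_{v_t}^m|-|x_t^{m'}-x_{v_t}^{m'}|\,\bigr|<r_t\,\bigr\}$. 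For a fixed pair $m\ne m'$, conditioning on $\mathcal F_t$ fixes $(x_t^m,x_t^{m'})$, and the planar region $\{(a,b):\bigl|\,|x_t^m-a|-|x_t^{m'}-b|\,\bigr|<r_t\}$ is a width-$O(r_t)$ neighbourhood of the piecewise-linear curve $|x_t^m-a|=|x_t^{m'}-b|$: for each fixed $a$ the admissible $b$ form at most two intervals of total length $\le 4r_t$, so intersected with the bounded projection of $\mathcal X$ onto coordinates $m,m'$ this region has Lebesgue measure $O(r_t)$. Together with the bounded two-coordinate marginal density this bounds the probability of the pair event by $O(r_t)$; summing over the $\binom M2$ pairs completes the case, and hence the lemma.

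\emph{Where the difficulty lies.} The $(2,2)$ and $(1,\infty)$ cases reduce to the elementary observation that an $\ell^2$-ball of radius $r_t$ and a coordinate slab of width $2r_t$ carry $h_{\mathcal X}$-mass $O(r_t)$. The real work is in $(\infty,1)$: one must (i) establish the ``water-filling'' description of the constrained $\ell^\infty$-minimiser and verify that it is a genuine subgradient of $\|\cdot-x_{v_t}\|_\infty$ at $x_t$ precisely when the largest coordinate deviation beats the second largest by at least $r_t$, and (ii) carry out the planar measure estimate showing that the near-tie event $\bigl|\,|x_t^m-x_{v_t}^m|-|x_t^{m'}-x_{v_t}^{m'}|\,\bigr|<r_t$ has probability $O(r_t)$.
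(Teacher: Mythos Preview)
Your proposal is correct and follows essentially the same route as the paper: in each of the three cases you identify the same ``bad'' event $B_t$ (the $\ell^2$-ball, the union of coordinate slabs, and the near-tie event $D_{(1)}-D_{(2)}<r_t$, respectively), verify that outside that event the constrained optimiser returns a true subgradient, and then bound $\mathrm{P}(B_t=1\mid\mathcal F_t)$ by $O(r_t)$ using the bounded density and bounded domain. Your treatment of the $(\infty,1)$ probability bound---the union over pairs $(m,m')$ and the explicit planar slicing argument showing the near-tie region has Lebesgue measure $O(r_t)$---is in fact more detailed than the paper's, which simply asserts the bound with the remark $C\approx 2M^2\,\mathrm{diameter}(\mathcal X)$; the $M^2$ factor there is exactly your $\binom{M}{2}$ pair count.
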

\noindent The lemma can be interpreted as follows: $A_t$ indicates a `bad' event, when a voter may not be providing a true subgradient of her utility function. However, the probability of the event occurring vanishes with $r_t$, which, as we will see below, is the right rate for the algorithm to converge.

\thmpqmain*
\begin{proof}
	We will show that Algorithm~\ref{alg:votermovetomin} meets the conditions in Theorem~\ref{thm:ssgd}. Let $b_t \triangleq \textup{E}_t[\tilde{g}_{v_t}(x_t)] - \bar{g}_t$ and $z_t \triangleq \tilde{g}_{v_t}(x_t) - \textup{E}_t[\tilde{g}_t]$, for some $\bar{g}_t\in \partial \bar{f}(x_t)$. Then, $\tilde{g}_{v_t}(x_t)$ can be written as $\tilde{g}_{v_t}(x_t) = \bar{g}_t + z_t + b_t$. We show that $b_t$, $z_t$ meet the conditions in the theorem, and so the algorithm converges.\\\\
	\noindent Let $A_t$ be the indicator function described in Lemma~\ref{lem:whengoodcondition}. Then, for some $\bar{g}_t\in \partial \bar{f}(x_t)$, 
	\begin{align*}
	b_t={}& \textup{E}_t[\tilde{g}_{v_t}(x_t)] - \bar{g}_t\\
	={}& \textup{E}_t[\tilde{g}_{v_t}(x_t)] - \textup{E}_t[g_t] & \text{Theorem}~\ref{thm:leib}, \text{ i.i.d sampling of $v$}\\
	={}& \textup{P}(A_t = 1 | \mathcal{F}_t)(\textup{E}_t[\tilde{g}_{v_t}(x_t) | A_t = 1] - \textup{E}_t[g_t| A_t = 1])  \\&+ \textup{P}(A_t = 0 | \mathcal{F}_t)(\textup{E}_t[\tilde{g}_{v_t}(x_t) | A_t = 0] - \textup{E}_t[g_t| A_t = 0])\\
	={}& \textup{P}(A_t = 1 | \mathcal{F}_t)(\textup{E}_t[\tilde{g}_{v_t}(x_t) | A_t = 1] - \textup{E}_t[g_t| A_t = 1])&\\
	\leq{}& C r_t (\textup{E}_t[\tilde{g}_{v_t}(x_t) | A_t = 1] - \textup{E}_t[g_t| A_t = 1]).&\text{Lemma}~\ref{lem:whengoodcondition}
	\end{align*}
	Combining with Lemma~\ref{lem:boundedgtildeerror}, and the fact that $r_t=r_0/t$, we have 
	$$\sum r_t \|b_t\| \leq{} \infty \text{ and there exists } C_1 \in \bbR < \infty \text{ s.t. }\|b_t\|_2 \leq C_1,\forall\, t.$$
	Finally, note that  $\|z_t\| \triangleq \|\tilde{g}_{v_t}(x_t) - \textup{E}_t[\tilde{g}_{v_t}(x_t)]\|$ is bounded for each $t$ because the $\|\tilde{g}_{v_t}(x_t)\|$ is bounded as defined. Thus, all the conditions in Theorem~\ref{thm:ssgd} are met for both $b_t$ and $z_t$, and the algorithm converges.
\end{proof}

\subsection{Proof of Theorem~\ref{thm:pqother}}
Instead of moving to their favorite point on the ball, voters now instead move in the direction of the gradient of their utility function to the boundary of the given neighborhood. In this case, we have:
\begin{equation}
\tilde{g}_{v_t}(x_t) = \frac{g_{v_t}}{\|g_{v_t}\|_q}; \textrm{ for }g_{v_t} \in \partial f_{v_t}(x_t).
\label{def:gtildepartial}
\end{equation}

The key to the proof is the following observation, that the $q$ norm of the gradient of the $p$ norm, except at the ideal points on each dimension, is constant. This observation is formalized in the following lemma:
\begin{lemma}
	$\forall\, (p, q)$ s.t. $p>0$, $q>0$, and $1/p + 1/q = 1$, $\| \nabla \|x - x_v \|_p \|_q = 1, \forall\, x$ s.t. $ x^m\neq x^m_v$ for any $m$. 
	\label{lem:constantqnorm}
\end{lemma}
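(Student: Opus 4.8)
The plan is to differentiate $g(x)\triangleq\|x-x_v\|_p$ in coordinates and then take the $\mathcal{L}^q$ norm of the resulting gradient, using the conjugacy relation $\tfrac1p+\tfrac1q=1$ to collapse all the exponents. First I would set $y\triangleq x-x_v$; the hypothesis that $x^m\neq x_v^m$ for every $m$ forces $y_m\neq 0$ for all $m$, and in particular $y\neq 0$. This is exactly the regularity needed for $g=\bigl(\sum_m|y_m|^p\bigr)^{1/p}$ to be differentiable at $x$ (for $p=1$ one needs every $y_m\ne 0$; for $1<p<\infty$ it is enough that $y\ne 0$, since $t\mapsto|t|^p$ is then everywhere differentiable and $u\mapsto u^{1/p}$ is differentiable away from the origin). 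The chain rule then gives, for each coordinate $m$,
\[
\frac{\partial g}{\partial x_m}(x)=\|y\|_p^{\,1-p}\,|y_m|^{\,p-1}\,\sign(y_m).
\]

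Next I would compute $\bigl\|\nabla g(x)\bigr\|_q$. Factoring out the scalar $\|y\|_p^{1-p}$ common to all coordinates,
\[
\bigl\|\nabla g(x)\bigr\|_q=\|y\|_p^{\,1-p}\Bigl(\sum_m|y_m|^{(p-1)q}\Bigr)^{1/q}.
\]
The crucial simplification is that $\tfrac1p+\tfrac1q=1$ yields $q=\tfrac{p}{p-1}$, hence $(p-1)q=p$ and $\tfrac{p}{q}=p-1$; therefore $\sum_m|y_m|^{(p-1)q}=\sum_m|y_m|^p=\|y\|_p^p$, so the bracketed factor equals $\|y\|_p^{\,p/q}=\|y\|_p^{\,p-1}$. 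Collecting exponents,
\[
\bigl\|\nabla g(x)\bigr\|_q=\|y\|_p^{\,1-p}\cdot\|y\|_p^{\,p-1}=\|y\|_p^{\,0}=1,
\]
which proves the lemma for every $p\in(1,\infty)$.

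Finally I would dispose of the two boundary pairs. For $(p,q)=(1,\infty)$, at a point with all $y_m\neq 0$ the gradient of $\|y\|_1$ is the sign vector $(\sign(y_1),\dots,\sign(y_M))$, whose $\mathcal{L}^\infty$ norm is $1$; for $(p,q)=(\infty,1)$, whenever the coordinate of maximum magnitude is unique the gradient of $\|y\|_\infty$ is $\sign(y_{m^*})\,e_{m^*}$, whose $\mathcal{L}^1$ norm is $1$ (and non-uniqueness of the argmax occurs only on a measure-zero set, which is harmless here, exactly as in the proof of Theorem~\ref{thm:pqmain}). I do not expect any genuine obstacle: the argument is a one-line differentiation followed by the exponent bookkeeping that the conjugacy relation makes cancel, and the only points requiring a little care are verifying differentiability under the stated hypothesis and handling the two extreme norms.
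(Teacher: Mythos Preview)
Your proof is correct and follows essentially the same approach as the paper's: differentiate the $\mathcal{L}^p$ norm coordinatewise via the chain rule, then use the conjugacy identity $(p-1)q=p$ to collapse the $\mathcal{L}^q$ norm of the gradient to $1$. Your treatment is in fact slightly more careful than the paper's, since you verify differentiability under the stated hypothesis and separately handle the endpoint pairs $(1,\infty)$ and $(\infty,1)$, which the paper's calculation leaves implicit.
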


\thmpqother*
\begin{proof}
	Since the probability of picking a voter $v$ such that $x_t^m = x^m_v$ for some dimension $m$ is $0$, we have $\tilde{g}_{v_t}(x_t) = g_{v_t}$ for $g_{v_t} = \nabla f_{v_t}(x_t)$. Thus we obtain the gradient exactly, and hence Theorem~\ref{thm:ssgd} applies with $b_t = 0$ for all $t$.
\end{proof}

\subsection{Proof of Propositions}
We now turn our attention to the case of Weighted Euclidean utilities and show that Algorithm~\ref{alg:votermovetomin} converges to the societal optimum. The analogue to Lemma~\ref{lem:whengoodcondition} for this case is (proved in the following subsection):
\begin{lemma}
	Suppose that $f_v(x) \triangleq \sum_{k = 1}^K \frac{w_v^k}{\|w_v\|_2}\|x^k - x^k_v\|_2$, and define the function $$A_t\triangleq \mathbb{I}{\{\tilde{g}_{v_t}(x_t)\notin \partial f_{v_t}(x_t)\}},$$ where $\tilde{g}_{v_t}(x_t)$ is as defined in~\eqref{def:gtilde} for $q=2$. Then there exists $C\in \bbR$ s.t. $\forall\, n$, $\textup{P}(A_t = 1 | \mathcal{F}_t) \leq C r_t$. 
	\label{lem:mainlemmaweightedeuc}
\end{lemma}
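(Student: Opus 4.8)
The plan is to mirror the proof of Lemma~\ref{lem:whengoodcondition}, and then to feed the resulting bound into Theorem~\ref{thm:ssgd} exactly as in the proof of Theorem~\ref{thm:pqmain}. So I want to produce a ``bad'' event $B_t$, measurable with respect to $\mathcal{F}_t$ and the draw of $v_t$, with $\textup{P}(B_t\mid\mathcal{F}_t)\le Cr_t$, such that on $B_t^c$ the reported movement of query~\eqref{query} equals the gradient $\nabla f_{v_t}(x_t)$, so that $A_t=0$ there. Recall $\tilde g_{v_t}(x_t)=\bigl(x_t-y_t\bigr)/r_t$ with $y_t\triangleq\arg\min_x\{f_{v_t}(x):\|x-x_t\|_2\le r_t\}$ under \textbf{Model A}; under \textbf{Model B} the movement is $\nabla f_{v_t}(x_t)/\|\nabla f_{v_t}(x_t)\|_2$, which by the normalization noted below is just $\nabla f_{v_t}(x_t)$, the easier case. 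Write $c_k\triangleq w_v^k/\|w_v\|_2$, so $c_k\ge0$, $c_k\le1$, $\sum_k c_k^2=1$; the event that some $c_k$ is not strictly positive is $\mathcal{F}_t$-null since $w_{v_t}$ has a bounded density, so I may assume $c_k>0$ and $x_t^k\ne x_v^k$ for all $k$, whence $f_v$ is differentiable at $x_t$ with $g\triangleq\nabla f_v(x_t)=(c_ku_k)_{k=1}^K$, $u_k\triangleq(x_t^k-x_v^k)/\|x_t^k-x_v^k\|_2$, and in particular $\|g\|_2=1$.

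The heart of the argument is an exact description of $y_t$: if $\|x_t^k-x_v^k\|_2\ge c_kr_t$ for every $k$, then $y_t=x_t-r_tg$, so $\tilde g_{v_t}(x_t)=g\in\partial f_v(x_t)$ and $A_t=0$. Indeed, along the ray $s\mapsto x_t-sg$, $s\in[0,r_t]$, the $k$-th block is $x_v^k+(\|x_t^k-x_v^k\|_2-sc_k)u_k$ with $\|x_t^k-x_v^k\|_2-sc_k\ge\|x_t^k-x_v^k\|_2-r_tc_k\ge0$, so $f_v(x_t-sg)=\sum_k c_k(\|x_t^k-x_v^k\|_2-sc_k)=f_v(x_t)-s$; at $s=r_t$ this reaches a point on the sphere of radius $r_t$ with value $f_v(x_t)-r_t$. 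For any feasible $z$ with $\|z-x_t\|_2\le r_t$, convexity of $f_v$ and Cauchy--Schwarz give $f_v(z)\ge f_v(x_t)+g^T(z-x_t)\ge f_v(x_t)-\|g\|_2\|z-x_t\|_2\ge f_v(x_t)-r_t$, with equality only if $z=x_t-r_tg$; hence $y_t=x_t-r_tg$ is the unique minimizer. (The same chain of equalities re-derives $\|g\|_2=1$ and shows why the \textbf{Model B} movement reduces to $g$.)

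Consequently $\{A_t=1\}$ is contained, up to an $\mathcal{F}_t$-null set, in $B_t\triangleq\{\exists\,k:\|x_t^k-x_{v_t}^k\|_2\le r_t\}$: if $\|x_t^k-x_{v_t}^k\|_2>r_t\ge c_kr_t$ for all $k$, then the differentiability hypotheses and the hypothesis of the previous paragraph both hold, and moreover $\|x_{v_t}-x_t\|_2^2=\sum_k\|x_t^k-x_{v_t}^k\|_2^2>Kr_t^2\ge r_t^2$, so the neighborhood does not even contain $x_{v_t}$. To bound $\textup{P}(B_t\mid\mathcal{F}_t)$, condition on $\mathcal{F}_t$ (which fixes $x_t$) and use that $(x_{v_t},w_{v_t})$ has a joint density bounded by a constant on the bounded set $\mathcal{X}\times\mathcal{W}$; the event $\{\|x_t^k-x_{v_t}^k\|_2\le r_t\}$ constrains only the $k$-th block of $x_{v_t}$, which must lie in a Euclidean ball of radius $r_t$ in $\bbR^{d_k}$ with $d_k\triangleq\dim(x^k)\ge1$, the remaining coordinates ranging over a bounded region. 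Hence each such probability is $O(r_t^{d_k})=O(r_t)$ since $d_k\ge1$ and $r_t=r_0/t\le r_0$, and a union bound over $k=1,\dots,K$ yields $\textup{P}(A_t=1\mid\mathcal{F}_t)\le\textup{P}(B_t\mid\mathcal{F}_t)\le Cr_t$.

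The main obstacle is the exact identification of $y_t$ in the second paragraph: it is special to weighted Euclidean utilities that moving along $-\nabla f_v(x_t)$ leaves the gradient direction on each block unchanged (each block term is positively homogeneous about $x_v^k$) and that the normalization forces $\|\nabla f_v(x_t)\|_2=1$, so the constrained optimum is exactly the point reached by moving a distance $r_t$ along $-\nabla f_v(x_t)$; everything after that is routine and parallels Lemma~\ref{lem:whengoodcondition}. With the lemma in hand, Proposition~\ref{thm:other1} follows just as Theorem~\ref{thm:pqmain} did: setting $b_t\triangleq\textup{E}_t[\tilde g_{v_t}(x_t)]-\bar g_t$ and $z_t\triangleq\tilde g_{v_t}(x_t)-\textup{E}_t[\tilde g_{v_t}(x_t)]$, the bound $\textup{P}(A_t=1\mid\mathcal{F}_t)\le Cr_t$ together with Theorem~\ref{thm:leib} and $\|\tilde g_{v_t}(x_t)\|_2=\|x_t-y_t\|_2/r_t\le1$ gives $\|b_t\|_2\le C_1$ and $\sum_t r_t\|b_t\|<\infty$, while the remaining hypotheses of Theorem~\ref{thm:ssgd} (boundedness of $\mathcal{X}$, uniqueness of the societal minimizer) are checked as there, yielding convergence to the societal optimum.
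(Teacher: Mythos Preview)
Your proof is correct and follows essentially the same two-part structure as the paper: you define the same bad event $B_t=\{\exists\,k:\|x_t^k-x_{v_t}^k\|_2\le r_t\}$, show that on $B_t^c$ the constrained minimizer is exactly $x_t-r_t\nabla f_{v_t}(x_t)$ so that $\tilde g_{v_t}(x_t)=\nabla f_{v_t}(x_t)$, and then bound $\textup{P}(B_t\mid\mathcal{F}_t)$ using the bounded density. Your treatment is in fact more careful than the paper's in two places: you justify the argmin identity via the ray computation and the convexity/Cauchy--Schwarz lower bound (the paper merely asserts the formula), and you track the dimension $d_k$ of each block to get $O(r_t^{d_k})$ rather than the paper's blanket $O(r_t^2)$, which matters when some $d_k=1$.
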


\thmotherone*

\begin{proof}
	The proof is then similar to that of Theorem~\ref{thm:pqmain}, and the algorithm converges to $x^* = \arg\min \textup{E}\left[\frac{\sum_{k = 1}^K w_v^k\|x^k - x^k_v\|_2}{\|w_v\|_2}\right]$. 
\end{proof}
Now, we sketch the proof for fully decomposable utility functions and $\mathcal{L}^\infty$ neighborhoods.
\thmothertwo*
\begin{proof}
	Consider each dimension separately. If $x^m_{t-1} < x^m_v$, then the sampled voter increases $x^m_{t-1}$ by $r_t$ as long as $x^m_{t-1} + r_t\leq x^m_v$. On the other hand if $x^m_{t-1} > x^m_v$, then the sampled voter decreases $x^m_{t-1}$ by $r_t$ as long as $x^m_{t-1} - r_t\geq x^m_v$. Thus except for when a voter's ideal solution is too close to the current point, the algorithm can be seen as performing SSGM on each dimension separately as if the utility function was $\mathcal{L}^1$ (the absolute value) on each dimension. Thus a proof akin to that of Theorem~\ref{thm:pqmain} with $p=1, q= \infty$ holds.
\end{proof}

\subsection{Proof of Theorem~\ref{thm:equiv}}
We now show that the algorithm finds directional equilibria in the following sense: if under a few conditions a trajectory of the algorithm converges to a point, then that point is a directional equilibrium.

\thmequiv*
\begin{proof}Suppose $x^*$ is not a directional equilibrium, i.e. $\exists \epsilon > 0$ s.t. $\|G(x^*)\|_2 = \epsilon$. Consider a $\delta$-ball around $x^*$, $B_\delta \triangleq \{x : \|x^* - x \|_2 < \delta\}$, with $\delta, \epsilon_2 > 0$ chosen such that $\exists m \in \{1 \dots M\}$ s.t. $\forall x \in B_\delta$, $\sign (G_m(x)) = \sign(G_m(x^*))$ and $|G_m(x)| > \epsilon_2$, i.e. the gradient in the $m$th dimension does not change sign and has magnitude bounded below. Such a $\delta, \epsilon_2$ exists by the continuity assumption (if $x^*$ is not a directional equilibrium, at least 1 dimension of $G(x^*)$ is non-zero and thus one can construct a ball around $x^*$ such that $G(x), x\in B_\delta$ in that dimension satisfies the conditions). 
	\\\\Now, one can show that the probability of leaving neighborhoods around $x^*$ goes to 1: $\forall t>0, 0<\delta_2<\delta$, w.p. 1 $\exists \tau \geq t$ s.t. $\|x_\tau - x^* \|_2 > \delta_2$.   
	\\Suppose $x_t \in B_{\delta_2}$ (otherwise $\tau = t$ satisfies), $r_k = \frac{1}{k}$. 
	\begin{align*}
	x_\tau &= x_t + \sum_{k=t}^{\tau} \Delta x_k & \Delta x_k \triangleq - r_k \frac{\nabla f^{v_k}(x_k)}{\| \nabla f^{v_k}(x_k)\|_2}\\
	\|x_\tau - x^* \|_2 &= \| x_t - x^* + \sum_{k=t}^{\tau} \Delta x_k \|_2 \\
	&\geq \|\sum_{k=t}^{\tau}\Delta x_k \|_2 - \| x_t - x^*\|_2 \\
	&\geq \|\sum_{k=t}^{\tau}\Delta x_k \|_2 - \delta_2 \\
	\|\sum_{k=t}^{\tau}\Delta x_k \|_2 &\geq | \sum_{k=t}^{\tau}\Delta x_{k,m} | & \text{defn of $\|\cdot\|_2$}\\
	&= |\bbE_v\left[\sum_{k=t}^{\tau}\Delta x_{k,m}\right] + \sum_{k=t}^{\tau}\Delta x_{k,m} - \bbE_v\left[\sum_{k=t}^{\tau}\Delta x_{k,m}\right] | \\
	&\geq |\bbE_v\left[\sum_{k=t}^{\tau}\Delta x_{k,m}\right] | - |\sum_{k=t}^{\tau}\Delta x_{k,m} - \bbE_v\left[\sum_{k=t}^{\tau}\Delta x_{k,m}\right] |
	\end{align*}
	By Hoeffding's inequality,
	\begin{align*}
	Pr\left(\sum_{k=t}^{\tau}\Delta x_{k,m} - \bbE_v\left[\sum_{k=t}^{\tau}\Delta x_{k,m}\right] \geq \epsilon_3 \right) &\leq \exp\left[-\frac{2(\tau - t)^2\epsilon_3^2}{2\sum_{k=t}^{\tau} \frac{1}{k}}\right] \\
	&\to 0 \text{ as } \tau \to \infty & 
	\end{align*}
	Furthermore, by the continuity assumption,
	\begin{align*}
	|\bbE_v\left[\sum_{k=t}^{\tau}\Delta x_{k,m}\right]| &\triangleq |\sum_{k=t}^{\tau}r_kG_m(x_k)| \\
	&\to \infty \text{ as } \tau \to \infty \text{ while } x_k \in B_{\delta_2}
	\end{align*}
	Thus, $Pr(\|x_\tau - x^* \|_2 > \delta_2) \to 1 $ as ${\tau \to \infty}$. Thus, if an infinite trajectory converges to $x^*$, then w.p. $1$, then $x^*$ is a directional equilibrium.
\end{proof}

\subsection{Proofs of Lemmas}
\label{sec:lem2proofs}
\textbf{Lemma~\ref{lem:boundedgtildeerror}}
For $q \in \{1,2, \infty\}$, $\exists K_2 \in \bbR^+ < \infty$ s.t. $\|\tilde{g}_{v_t} -  g_t\|_2 \leq K_2$, $\forall\, g_t \in \partial f_{v_t}(x_t), v_t, x_t$.
\begin{proof}
	\begin{align*}
	\|\tilde{g}_{v_t}(x_t) -  g_t\|_2 &\leq \|\tilde{g}_{v_t}(x_t)\|_2 + \|g_t\|_2 \\
	&= \frac{\|x_t - \arg\min_x[ \|x - x_{v_t}\|_p : \|x - x_t\|_q \leq r_t]\|_2}{r_t} + \|g_t\|_2\\
	&\leq K_1 + \|g_t\|_2\\
	&\leq K_2
	\end{align*} 
	$\text{for some }K_1,K_2 \in \bbR^+$. The second inequality follows from the fact that for finite M-dimensional vector spaces, $\|y\|_2 \leq \|y\|_1$ and $\|y\|_2 \leq \sqrt{M} \|y\|_\infty$. The third follows from the norm of the subgradients of the $p$ norm being bounded. 
\end{proof}
\noindent\textbf{Lemma~\ref{lem:whengoodcondition}}, case ($p = 2, q = 2$). 
\begin{proof}
	Remember that $A_t\triangleq \mathbb{I}{\{\tilde{g}_{v_t}(x_t)\notin\partial f_{v_t}(x_t)\}}$. 
	Let $B_t = \mathbb{I}\{\|x_{v_t} - x_t\|_2 \leq r_t\}$. We show that A) $B_t = 0 \implies A_t = 0$, and B) $\exists C\in \bbR$ s.t. $\textup{P}(B_t = 1 | \mathcal{F}_t) \leq C r_t$. Then, $\exists C\in \bbR$ s.t. $\textup{P}(A_t = 1 | \mathcal{F}_t) \leq C r_t$. \\\\
	\textbf{Part A}, $B_t = 0 \implies \tilde{g}_{v_t}(x_t) = g_t$, for some $g_t \in \partial f_{v_t}(x_t)$:\\
	First, note that
	\begin{align*}
	\partial f_{v_t}(x) &= \partial \|x - x_{v_t}\|_2 \\
	&= \begin{cases} 
	\{ \frac{x - x_{v_t}}{\|x_{v_t} - x\|_2}\} & x \neq x_{v_t} \\
	\{g : \|g\|_2 \leq 1\}  & x = x_{v_t} 
	\end{cases}
	\end{align*}
	If $\|x_{v_t} - x_t\|_2 > r_t$
	, then
	\begin{align*}
	\arg\min_x[ \|x - x_{v_t}\|_2 : \|x - x_t\|_2 \leq r_t] &= x_t + r_t \frac{x_{v_t} - x_t}{\|x_{v_t} - x_t\|_2}
	\end{align*}
	Then,
	\begin{align*}
	\tilde{g}_{v_t}(x_t) &= \frac{x_t - \arg\min_x[ \|x - x_{v_t}\|_2 : \|x - x_t\|_2 \leq r_t]}{r_t} & \text{Definition}\\
	&= \frac{x_t - \left(x_t + r_t \frac{x_{v_t} - x_t}{\|x_{v_t} - x_t\|_2}\right)}{r_t}\\
	&= \frac{x_t - x_{v_t}}{\|x_{v_t} - x_t\|_2} \\
	&\in \partial f_{v_t}(x_t)
	\end{align*} 
	\noindent\textbf{Part B}, $\exists C\in \bbR$ s.t. $\textup{P}(B_t = 1 | \mathcal{F}_t) \leq C r_t$:\\
	\begin{align*}
	\textup{P}(B_t = 1 | \mathcal{F}_t) &= \textup{P}(\|x_{v_t} - x_t\|_2 \leq r_t |\mathcal{F}_t)\\
	&= \int_{x \in \{x : \|x - x_t\|_2 \leq r_t\}} h_{\mathcal{X} | \mathcal{F}_t} (x) dx \\
	&= \int_{x \in \{x : \|x- x_t\|_2 \leq r_t\}} h_{\mathcal{X}} (x) dx & v \text{ drawn independent of history}\\
	&\leq Cr_t^2 & \text{bounded } h_{\mathcal{X}} \\
	& \leq Cr_t & r_t \leq 1 \text{ eventually}
	\end{align*}
	for some $C\in \bbR < \infty$. Note that $C$ depends on the volume of a sphere in $M$ dimensions. 
\end{proof}
\noindent\textbf{Lemma~\ref{lem:whengoodcondition}}, case ($p = 1, q = \infty$).
\begin{proof}
	Let $h_{v_t}(x_t)\triangleq \begin{bmatrix}
	\sign(x_{v_t}^1 - x_t^{1}), &
	\hdots,&
	\sign(x_{v_t}^m - x^m_t), &
	\hdots,&\
	\sign(x_{v_t}^M- x_t^{M})
	\end{bmatrix}^T$\\\\
	Let $B_t = \mathbb{I}\{\exists m, |x_{v_t}^m - x^m_t| \leq r_t\}$. We show the the same two parts as in the above proof.\\\\
	\textbf{Part A}, $B_t = 0 \implies \tilde{g}_{v_t}(x_t) = g_t$, for some $g_t \in \partial f_{v_t}(x_t)$:\\
	First, note that the subgradients are
	\begin{align*}
	\partial f_{v_t}(x) &= \partial \|x - x_{v_t}\|_1 \\
	&= \{g : \|g\|_\infty \leq 1, g^T(x - x_{v_t}) = \|x - x_{v_t}\|_1\}
	\end{align*}
	\\\\If $\forall\, m, |x_{v_t}^m - x^m_t| > r_t$, then
	\begin{align*}
	\arg\min_x[ \|x - x_{v_t}\|_1 : \|x - x_t\|_\infty \leq r_t] &= x_t + r_t h_{v_t}(x_t)
	\end{align*}
	Then,
	\begin{align*}
	\tilde{g}_{v_t}(x_t) &= \frac{x_t - \arg\min_x[ \|x - x_{v_t}\|_1 : \|x - x_t\|_\infty \leq r_t]}{r_t} & \text{Definition}\\
	&= \frac{x_t - \left(x_t + r_t h_{v_t}(x_t)\right)}{r_t}\\
	&= -h_{v_t}(x_t)\\
	&\in \partial f_{v_t}(x_t)
	\end{align*}
	\noindent\textbf{Part B}, $\exists C\in \bbR$ s.t. $\textup{P}(B_t = 1 | \mathcal{F}_t) \leq C r_t$:\\
	\begin{align*}
	\textup{P}(B_t = 1 | \mathcal{F}_t) &= \textup{P}(\exists m: |x_{v_t}^m - x^m_t| \leq r_t |\mathcal{F}_t)\\
	&= \int_{x \in \{x : \exists m, |x^m - x^m_t| \leq r_t\}} h_{\mathcal{X} | \mathcal{F}_t} (x) dx \\
	&= \int_{x \in \{x : \exists m, |x^m - x^m_t| \leq r_t\}} h_{\mathcal{X}} (x) dx & v \text{ drawn independent of history}\\
	&\leq Cr_t & \text{bounded }h_\mathcal{X}, \text{ fixed } M, \text{ bounded }\mathcal{X}
	\end{align*}
	for some $C\in \bbR < \infty$. In the last line, $C  \approxeq 2M (\text{diameter}(\mathcal{X}))$, based on the volume of the slices around the ideal points on each dimension.  
\end{proof}

\noindent\textbf{Lemma~\ref{lem:whengoodcondition}}, case ($p = 1, q = \infty$). 
\begin{proof}
	Let $ {\bar{m}_t} \in \arg \max_m |x_{v_t}^m - x^m_t|$,\\
	Let $h_{v_t}(x_t) \triangleq \begin{bmatrix}
	0,&0, &
	\hdots,&0,&
	\sign(x^{\bar{m}_t}_t - x_{v_t}^{\bar{m}_t}),
	&0,\hdots,&0, &
	0
	\end{bmatrix}^T$,\\ Let $B_t \triangleq \mathbb{I}\{\exists m\neq
	\bar{m}_t : |x_{v_t}^{\bar{m}_t} - x^{\bar{m}_t}_t| < |x_{v_t}^{m} - x^m_t| + r_t\}$. We show the the same two parts as in the above proofs.\\\\
	\textbf{Part A}, $B_t = 0 \implies \tilde{g}_{v_t}(x_t) = g_t$, for some $g_t \in \partial f_{v_t}(x_t)$:\\
	
	First, note that when $B_t = 0$, the set of subgradients is
	\begin{align*}
	\partial f_{v_t}(x) &= \partial \|x - x_{v_t}\|_\infty \\
	&= \{h_{v_t}(x_t)\}
	\end{align*}
	Also when $B_t = 0$,
	\begin{align*}
	\arg\min_x[ \|x - x_{v_t}\|_\infty : \|x - x_t\|_1 \leq r_t] &= x_t - r_t h_{v_t}(x_t)
	\end{align*}
	Then,
	\begin{align*}
	\tilde{g}_{v_t}(x_t) &= \frac{x_t - \arg\min_x[ \|x - x_{v_t}\|_1 : \|x - x_t\|_\infty \leq r_t]}{r_t} & \text{Definition}\\
	&= \frac{x_t - \left(x_t - r_t h_{v_t}(x_t)\right)}{r_t}\\
	&= h_{v_t}(x_t)\\
	&\in \partial f_{v_t}(x_t)
	\end{align*} 
	\noindent\textbf{Part B}, $\exists C\in \bbR$ s.t. $\textup{P}(B_t = 1 | \mathcal{F}_t) \leq C r_t$:\\
	\begin{align*}
	\textup{P}(B_t = 1 | \mathcal{F}_t) &= \textup{P}(\mathbb{I}\{\exists m\neq
	\bar{m}_t : |x_{v_t}^{\bar{m}_t} - x^{\bar{m}_t}_t| < |x_{v_t}^{m} - x^m_t| + r_t\} |\mathcal{F}_t)\\
	&= \int_{x \in \{x : \exists m \neq
		\bar{m}_t\text{ s.t. } |x_{v_t}^{\bar{m}_t} - x^{\bar{m}_t}_t| < |x_{v_t}^{m} - x^m_t| + r_t\}} h_{\mathcal{X} | \mathcal{F}_t} (x) dx \\
	&= \int_{x \in \{x : \exists m \neq
		\bar{m}_t\text{ s.t. } |x_{v_t}^{\bar{m}_t} - x^{\bar{m}_t}_t| < |x_{v_t}^{m} - x^m_t| + r_t\}} h_{\mathcal{X}} (x) dx & v \text{ drawn independent of history}\\
	&\leq Cr_t & \text{bounded } h_{\mathcal{X}}, \text{ fixed } M, \text{ bounded }\mathcal{X}
	\end{align*}
	for some $C\in \bbR < \infty$. Note that $C \approxeq 2M^2 (\text{diameter}(\mathcal{X}))$, based on the volume of the slices around each dimension.   
\end{proof}
\noindent \textbf{Lemma~\ref{lem:constantqnorm}}
$\forall\, (p, q)$ s.t. $p>0, q>0, 1/p + 1/q = 1$, $\| \nabla \|x - x_v \|_p \|_q = 1, \forall\, x$ s.t. $\forall\, m, x^m\neq x^m_v$. 
\begin{proof}
	If $x^m \neq x^m_v, \forall\, m:$
	\begin{align*}
	\nabla_m \|x - x_v \|_p &= \nabla_m \left(\sum_m |x^m - x^m_v |^p\right)^{1/p}\\
	&= \frac{1}{p} \frac{\nabla_m |x^m - x^m_v |^{p}}{\left(\sum_m |x^m - x^m_v |^p\right)^{1 - 1/p}} \\
	&= \frac{|x^m - x^m_v |^{p-1} \left(\nabla_m |x^m - x^m_v |\right)}{\|x - x_v \|_p^{p - 1}}
	\end{align*}
	Then
	\begin{align*}
	\|\nabla \|x - x_v \|_p\|_q &= \|\frac{|x^m - x^m_v |^{p-1} \left(\nabla_m |x^m - x^m_v |\right)}{\|x - x_v \|_p^{p - 1}}\|_q\\
	&= \frac{1}{{\|x - x_v \|_p^{p - 1}}} \left(\sum_m \left||x^m - x^m_v |^{p-1} \left(\nabla_m |x^m - x^m_v |\right)\right|^q\right)^{1/q} \\
	&= \frac{1}{{\|x - x_v \|_p^{p - 1}}} \left(\sum_m |x^m - x^m_v |^{(p-1)q}\right)^{1/q}\\
	&= \frac{1}{{\|x - x_v \|_p^{p - 1}}} {\|x - x_v \|_p^{p/q}} & (p-1)q = p\\
	&= 1
	\end{align*}
\end{proof}
\noindent \textbf{Lemma~\ref{lem:mainlemmaweightedeuc}} \textit{Suppose that $f_v(x) \triangleq \sum_{k = 1}^K \frac{w_v^k}{\|w_v\|_2}\|x^k - x^k_v\|_2$, and define the function $$A_t\triangleq \mathbb{I}{\{\tilde{g}_{v_t}(x_t)\notin \partial f_{v_t}(x_t)\}},$$ where $\tilde{g}_{v_t}(x_t)$ is as defined in~\eqref{def:gtilde} for $q=2$. Then there exists $C\in \bbR$ s.t. $\forall\, n$, $\textup{P}(A_t = 1 | \mathcal{F}_t) \leq C r_t$. }
\begin{proof}
	Let $B_t = \mathbb{I}\{ \exists k \text{ s.t. }\|x^k_{v_t} - x^k_t\|_2 \leq r_t\}$. We show the same two parts for $B_t$ as for the proofs for Lemma~\ref{lem:whengoodcondition}.\\
	\textbf{Part A}, $B_t = 0 \implies \tilde{g}_{v_t}(x_t) = g_t$, for some $g_t \in \partial {f_{v_t}(x_t)}$:\\
	First, note that, when $B_t = 0$,
	\begin{align*}
	\partial_m {f_{v_t}(x_t)} &= \partial_m \sum_{k = 1}^K \frac{w_v^k}{\|w_v\|_2}\|x^k - x^k_v\|_2 \\
	&= \frac{w^{k_m}}{\|w_v\|_2}
	\frac{x^m- x^m_{v_t}}{\|x^{k_m}_{v_t} - x^{k_m}_t\|_2} & \text{ where $k_m$ is the subspace that contains the $m$th dimension}
	\end{align*}
	Also if $B_t = 0$, then
	\begin{align*}
	\arg\min_x\left[ \sum_{k = 1}^K \frac{w^{k}}{\|w_v\|_2}\|x^k - x^k_v\|_2 : \|x - x_t\|_2 \leq r_t\right] &= x_t + r_t \left[\dots,\frac{w^{k_m}}{\|w_v\|_2}
	\frac{x^m_{v_t} - x^m}{\|x^{k_m}_{v_t} - x^{k_m}_t\|_2},\dots\right]
	\end{align*}
	Then,
	\begin{align*}
	\tilde{g}_{v_t}(x_t) &= \frac{x_t - \arg\min_x[ \|x - x_{v_t}\|_2 : \|x - x_t\|_2 \leq r_t]}{r_t} & \text{Definition}\\
	&\in \partial f_{v_t}(x_t)
	\end{align*} 
	\noindent\textbf{Part B}, $\exists C\in \bbR$ s.t. $\textup{P}(B_t = 1 | \mathcal{F}_t) \leq C r_t$:\\
	\begin{align*}
	\textup{P}(B_t = 1 | \mathcal{F}_t) &= \textup{P}(\|x_{v_t} - x_t\|_2 \leq r_t |\mathcal{F}_t)\\
	&= \int_{x \in \{x : \exists k \text{ s.t. }\|x^k_{v_t} - x^k_t\|_2 \leq r_t\}} h_{\mathcal{X} | \mathcal{F}_t} (x) dx \\
	&= \int_{x \in \{x :\exists k \text{ s.t. }\|x^k_{v_t} - x^k_t\|_2 \leq r_t\}} h_{\mathcal{X}} (x) dx & v \text{ drawn independent of history}\\
	&\leq Cr_t^2 & \text{bounded } h_{\mathcal{X}} \\
	& \leq Cr_t & r_t \leq 1 \text{ eventually}
	\end{align*}
	for some $C\in \bbR < \infty$. Note that $C$ depends on $K$ and $M$. 
\end{proof}

\vskip 0.2in
\bibliography{bibliography2}
\bibliographystyle{plainnat}

\end{document}